\newtheorem{proposition}{Proposition}[section]
\newtheorem{theorem}{Theorem}[section]
\def\a{\alpha}
\newcommand{\si}{\sigma}
\newcommand{\ta}{\tau}
\def\th{\theta}
\def\R{\mathbb R}
\newcommand{\ab}{{\bar a}}
\newcommand{\xb}{{\bar x}}
\newcommand\Wb{\,\overline{\!W\!}\,}
\begin{document}
\title[Evolutionary altered outcomes]{Evolutionarily induced alternative states and coexistence in systems with apparent competition}
\author[S.J. Schreiber]{Sebastian J. Schreiber}
\author[S. Patel]{Swati Patel}
\address{Department of Evolution and Ecology, One Shields Avenue, University of California, Davis, California 95616}
\email{sschreiber@ucdavis.edu}
\email{swpatel@ucdavis.edu}
\maketitle
\bibliographystyle{plainnat}

\begin{abstract}
Predators often consume multiple prey and by mutually subsidizing a shared predator, the prey may reciprocally harm each other. When predation levels are high, this apparent competition can culminate in a prey species being displaced. Coupling quantitative genetics and Lotka-Volterra models, we study how predator evolution alters this and other ecological outcomes. These models account for a trade-off between the predator's attack rates on two prey species. We provide a mathematical characterization of a strong form of persistence--permanence--for which there is a global attractor bounded away from extinction. When the evolutionary dynamics occur at a sufficiently slower time scale than the ecological dynamics, we also characterize attractors and their basins' of attraction using singular perturbation theory and a graphical approach to the eco-evolutionary dynamics. Our results show that eco-evolutionary feedbacks can mediate permanence at intermediate trade-offs in the attack rates. However, at strong trade-offs, permanence is lost. Despite this loss of permanence, there can be attractors supporting coexistence. These attractors, however, may coincide with attractors at which the predator is excluded. Our results highlight that evo-evolutionary feedbacks can alter community structure by mediating coexistence or leading to trait-dependent alternative stable states.  
\end{abstract}

\bibliographystyle{plainnat}

\section{Introduction}

Ecological communities consist of complex webs of interacting species, each of which contain phenotypically diverse individuals. Species interactions including competition, predation, and mutualism, generate nonlinear feedbacks that determine community composition and stability. These feedbacks are the key focus of community ecology theory. Within each species, individuals often differ in many traits including gender, size, behavior, or physiology. This variation provides the raw material for natural selection and thus is a key focus of evolutionary theory. Traditionally, evolutionary biologists and community ecologists developed theory and ran experiments without considering the other discipline~\citep{fussmann-etal-07}. This separation  stemmed from the traditional belief that ecological and evolutionary processes occur on vastly different time scales. However, in \citet{schoener-11}'s review of ``the newest synthesis'', there is growing empirical evidence that feedbacks between ecological and evolutionary processes occur on more commensurate time scales (e.g. tens to thousands of generations instead of hundreds of thousands of generations) and the effects of these feedbacks can be substantial. Hence, ``[n]othing in evolution or ecology [may] make sense except in the light of the other''~\citep{pelletier-etal-09}.  A major challenge facing this synthesis, ``is  whether the persistence of interactions and the stability of communities truly rely upon ongoing rapid evolution or whether such rapid evolution is ecologically trivial''~ \citet{thompson-99}. Here we confront this challenge for the ``apparent competition'' community module in which two prey species share a common predator species. 

Predators often consume multiple prey. By mutually subsidizing a shared predator, the prey species may reciprocally harm each other and, thereby appear to be competing~\citep{holt-77,holt-lawton-93,holt-lawton-94}. These negatively reciprocal responses have been demonstrated empirically in plant-herbivore systems~\citep{rand-03,rand-etal-04}, insect communities~\citep{mueller-godfray-97,rott-godfray-98,morris-etal-01}, and hosts sharing common pathogens~\citep{tompkins-etal-00,cobb-etal-10}. When prey are not resource limited or predation pressure is strong, models predict that  apparent competition can lead to the exclusion of one of the prey species -- dynamic monophagy~\citep{holt-lawton-94}. 

From the perspective of the predator, attacking multiple prey may ensure greater energetic gains or provide insurance against the loss of a focal prey species~\citep{macarthur-55}. To what extent these benefits exist depends largely on the trade-offs between the abilities of attacking the different prey species. For example, many predator species optimal attack rate occur at an intermediate ratio of predator to prey body sizes~\citep{brose-etal-06,brose-10}. Hence, predators experience a size trade-off between being larger to optimize attack rates on larger prey and being smaller to optimize attack rates on smaller prey.      

Early work on the evolution of a predator attacking two prey species has focused on the consequences of the ecological and evolutionary feedbacks on the phenotypic distribution of the predator~\citep{wilson-turelli-86,jmb-03,rueffler-etal-06,abrams-06a,abrams-06b,nurmi-parvinen-13}. For example, using a single-locus selection model based on differential utilization of two prey species, \citet{wilson-turelli-86} illustrated that there can be selection for polymorphic predators in which, surprisingly, the heterozygous individuals are the least fit. More recently, \citet{ecology-11b} used a quantitative genetics framework to examine how phenotypic variation in a predator affects ecological outcomes. They found that eco-evo feedbacks can marginalize or even reverse the negative effects of apparent competition and mediate coexistence of the prey species. Furthermore, at sufficiently strong trade-offs, the eco-evolutionary dynamics exhibit alternative stable states. This study, however, did not examine how trade-offs influence predator persistence or provide mathematically rigorous proofs of their results. 

Here, we provide a mathematically rigorous verification of the results of \citet{ecology-11b} and explore how trade-offs influence predator as well as prey persistence.  In section 2, we introduce the model which couples Lotka-Volterra dynamics for apparent competition with a quantitative genetics model for the predator trait. In section 3, we study a strong form of persistence, namely permanence at which there is a global attractor bounded away from extinction~\citep{hutson-schmitt-92}. We characterize permanence and examine how the strength of trade-offs influences permanence. In section 4, we refine our analysis of the eco-evolutionary dynamics using singular perturbation techniques and provide estimates for the size of the basins of attraction for the stable equilibria of eco-evolutionary dynamics. Coupling these results with a graphical approach to the eco-evolutionary dynamics, we explore the attractor structure of the eco-evolutionary dynamics and identify under what conditions there is conditional coexistence of all three species. In section 5, we conclude with a discussion of the main implications of our work and highlight future research directions.  The proofs of the main results are presented in sections 6 and 7.

\section{Coupling the Ecological and Evolutionary Dynamics}

There exist a variety of ways that modelers have coupled ecological and evolutionary processes. These approaches differ in whether the traits under selection are discrete or continuous, the underlying genetic architecture and processes (e.g. clonal evolution of haploid individuals versus diploid individuals with recombination), and whether the entire trait distribution is modelled. Here, we use a quantitative genetics framework which assumes a continuous trait and bears some similarities to the adaptive dynamics approach. We develop the model from~\citet{ecology-11b} step by step to illustrate how the quantitative genetics framework can be used to develop models with eco-evo feedbacks, and to highlight the assumptions underlying these models. 

We begin by describing the ecological dynamics in the absence of evolution. These dynamics involve a predator population with density $P$ consuming two prey species with densities $N_1$ and $N_2$. Each prey species $i$ exhibits logistic dynamics with intrinsic rate of growth $r_i$ and carrying capacity $K_i$. The attack rate of the predator on prey species $i$ is $a_i$. The conversion efficiency $e_i$ determines how the number of prey $i$ eaten by the predator convert to new predator numbers, and $d$ is the predator's per-capita death rate. Hence, the ecological dynamics are given by
\begin{equation}\label{eco}
\begin{aligned}
\frac{d N_i}{dt} &= r_iN_i(1-N_i/K_i) - P a_i N_i \\
\frac{dP}{dt} &= P (e_1a_1N_1+e_1a_1N_1-d)
\end{aligned}
\end{equation}
\citet{takeuchi-adachi-83} proved that this Lotka-Volterra system always has a globally stable equilibrium that either supports all three species (coexistence), both prey species (predator exclusion), or the predator and one prey species (prey exclusion). These outcomes can be determined by examining the per-capita growth rates of missing species from the boundary equilibria. 

To overlay the evolutionary dynamics on top of the ecological dynamics, we take a quantitative genetics approach, sometimes playfully called ``Lande Land''~\citep{lande-76}. In Lande Land, each individual predator's phenotype is determined by a continuous trait $x$. This trait determines the predator's attack rate $a_i(x)$ on each of the prey species $i=1,2$. The attack rates $a_i(x)$  are maximal at an optimal trait value $x=\theta_i$ and decrease away from this optimal trait value in a Gaussian manner, i.e., $a_i(x) = \a_i \exp\Bigl[-\frac{(x-\th_i)^2}{2\ta_i^2}\Bigr]$, where $\alpha_i$ is the maximal attack rate and $\tau_i$ determines how steeply attack rate declines with distance from the optimal trait value. In effect, $\tau_i$ determines how phenotypically specialized a predator must be to use prey $i$. 

This model of the attack rates mimics the common empirical situation in which quantitative trait variation in a predator influences individuals relative use of alternative resources. For predator-prey interactions ranging from terrestrial predators of arthropods to aquatic predators of zooplankton~\citep{brose-10}, the attack rate of a predator is optimal at intermediate predator-prey body size ratios~\citep{brose-etal-06}. When predators attack prey of different sizes, there is an inherent trade-off between being larger to better attack the larger prey species and being smaller to better attack the smaller prey species. Alternatively, individuals of threespine stickleback (\emph{Gasterosteus aculeatus}) preferentially consume either benthic insect larvae or limnetic zooplankton~\citep{araujo-etal-08, matthews-etal-10}. For these populations, gill raker length or number are approximately normally distributed, and individuals at different ends of this distribution tend to consume different prey types~\citep{robinson-00}.

In Lande Land, there are two main assumptions. First, the trait remains normally distributed in the population~\citep{lande-76,turelli-barton-94}. This assumption corresponds to the trait being determined by additive contributions of many independent loci. Second, the variance $\si^2$ in this trait remains constant over time.  This assumption allows for moment closure: when it holds, the mean trait value determines the entire distribution. While this assumption is likely to be violated over longer-time scales, it provides an analytically tractable, first approximation to the full distributional dynamics. While epistasis, linkage disequilibrium or
genotype-by-environment interaction can generate substantial deviations
from a normal trait distribution, \citet{turelli-barton-94} showed
numerically that the normal approximation still gives remarkably accurate
predictions for dynamics of the mean and variance of the trait value under
a wide variety of assumptions. Under weak assumptions, even
frequency-dependent disruptive selection maintains a nearly Gaussian trait
distribution~\citep{burger-gimelfarb-04}. 

Let $\xb$ and $\si^2$ be the mean and variance of this normally distributed trait. The phenotypic variance $\si^2=\si_G^2+\si_E^2$ has a genetic and an environmental component. The environmental component $\si_E^2$ corresponds to non-heritable variation in the trait e.g. individuals that developed in different environmental backgrounds. The phenotypic variation in the trait, whether heritable or not, influences the ecological dynamics due to nonlinear averaging of the attack rate at the scale of the predator population. Specifically, the average attack rate on prey $i$ is
\[
\ab_i(\xb) = \int_{-\infty}^\infty a_i(x) p(x,\xb)\, dx
= \frac{\a_i \ta_i}{\sqrt{\si^2+\ta_i^2}} \exp\Bigl[-\frac{(\xb-\th_i)^2}{2(\si^2+\ta_i^2)}\Bigr]\,.
\]
where $p(x,\xb) = \frac{1}{\sqrt{2\pi \si^2}} \exp\Bigl[-\frac{(x-\xb)^2}{2\si^2}\Bigr]$,
is the density of the normal distribution with mean $\xb$ and variance $\si^2$. 

Under the assumptions of a normally distributed trait with fixed variance, \citet{lande-76} showed that the rate of change of the mean trait is given by 
\[
\frac{d\xb}{dt}=\sigma_G^2 \frac{\partial \Wb}{\partial \xb} 
\]
where $\Wb$ is the average per-capita growth rate or fitness of the evolving species. Here, 
\[
\Wb=\sum_{i=1}^2 e_i \ab_i(\xb) N_i - d
\]
for the predator. In words, the rate of change of the mean trait is proportional to the genetic variance of the trait and the gradient of the fitness. Intuitively, evolution selects for increasing fitness. However, due to ecological feedbacks, the graph of the fitness function, ``the fitness landscape'', may change as the trait changes, leading to eco-evo feedbacks. 

Putting all the pieces together, we get that the ecological and evolutionary dynamics are given by 
\begin{equation}\label{dyn1}
\begin{aligned}
\frac{d N_i}{dt} &= r_iN_i(1-N_i/K_i) -  \ab_i(\xb) N_i P\\
\frac{dP}{dt} &= P \,\Wb\\
\frac{d\xb}{dt} &= \si_G^2 \frac{\partial \Wb}{\partial \xb} 
\end{aligned}
\end{equation}
where
\begin{equation}\label{dyn2}
\frac{\partial \Wb}{\partial \xb} = \sum_{i=1}^2 \frac{e_i N_i \tau_i \alpha_i (\theta_i -\xb)}{(\tau_i^2+\sigma^2)^{3/2}} \exp \left[-\frac{(\xb-\theta_i)^2}{2(\tau_i^2+\sigma^2)}\right].
\end{equation}
The state space for these dynamics are $\R^3_+\times \R$ where $\R_+=[0,\infty)$. 

\section{Permanence}

We begin by examining the conditions that ensure that all three species coexist in the sense of permanence, and how the strength of the trade-off affects these conditions. We say our system is \emph{permanent} if there exists $\beta>0$ such that \[
1/\beta\ge \limsup_{t \rightarrow \infty} \max\{N_1(t),N_2(t),P(t)\}\ge \liminf_{t \rightarrow \infty} \min\{N_1(t),N_2(t),P(t)\} \geq \beta
\] for all initial positive population densities ($N_1(0)N_2(0)P(0)>0$) and any initial phenotype in $\xb(0) \in \mathbb{R}$.

To evaluate permanence for our eco-evolutionary system, we begin by considering the dynamics in each two-species subsystem. For the predator-prey subsystem $(N_i, P, \xb)$, the following proposition implies that the predator evolves to specialize on the present prey and may or may not persist with that prey at that phenotype. This proposition states the result for $i=1$, but also applies for $i=2$. 

\begin{proposition}\label{prop:pred-prey}
Assume $N_2(0)=0$ and $N_1(0)P(0)>0$. If $e_1\ab_1(\theta_1)>d$, then 
\[
\lim_{t\to\infty} (N_1(t),P(t),\xb(t))= \left(\frac{d}{e_1\ab_1(\theta_1)}, \frac{r_1(1-\frac{\hat{N}_1}{K_1})}{\ab_1(\theta_1)}, \theta_1\right).
\]
Alternatively, if $e_1\ab_1(\theta_1)\leq d$, then 
\[
\lim_{t\to\infty} (N_1(t),P(t),\xb(t))= (K_1,0, \theta_1).
\]
\end{proposition}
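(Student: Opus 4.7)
\emph{Setup and reduction.} Since $N_2(0)=0$ and the $N_2$-equation has $N_2$ as a factor, $N_2(t)\equiv 0$ for all $t\geq 0$, and the problem collapses to the autonomous three-dimensional system on $(N_1,P,\xb)\in\R^2_+\times\R$. My first step is to establish boundedness: comparison with the logistic equation bounds $N_1$ by $\max\{N_1(0),K_1\}$, and a dissipative estimate on $V=e_1N_1+P$ of the form $\dot V\leq C-dV$ bounds $P$, so the forward orbit eventually lies in a compact positively invariant set $\Omega$. I would then rewrite the evolutionary equation as $\dot{\xb}=C_0(\xb)\,N_1\,(\th_1-\xb)$ with strictly positive coefficient $C_0(\xb)>0$; thus $\xb(t)$ is monotone, trapped in the compact interval with endpoints $\xb(0)$ and $\th_1$, and converges to some limit $\xb^\star$.

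\emph{Lyapunov--LaSalle step.} Next I would apply LaSalle's invariance principle on $\Omega$ with the Lyapunov candidate $L(\xb)=(\xb-\th_1)^2/2$. A direct computation gives $\dot L=-C_0(\xb)\,N_1\,(\xb-\th_1)^2\leq 0$, so the $\om$-limit set of the trajectory is contained in the largest invariant subset of $\{\dot L=0\}=\{N_1=0\}\cup\{\xb=\th_1\}$. Combined with $\xb(t)\to\xb^\star$, this forces the $\om$-limit to lie in $\{\xb=\xb^\star\}$.

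\emph{Ruling out $\xb^\star\neq\th_1$ (main obstacle).} The delicate step is excluding $\xb^\star\neq\th_1$. Under that hypothesis the $\om$-limit lies in $\{\xb=\xb^\star,\,N_1=0\}$, on which the flow reduces to $\dot P=-dP$ and $\dot{\xb}=0$, so the largest invariant set collapses to the single point $(0,0,\xb^\star)$. I would rule this out by a uniform lower bound on the per-capita prey growth rate: because $\ab_1$ is bounded on the compact range of $\xb$, one can choose a neighbourhood of $\{N_1=P=0\}$ on which $\dot N_1/N_1=r_1(1-N_1/K_1)-\ab_1(\xb)P\geq r_1/2$, and hence $(N_1(t),P(t))\to(0,0)$ is impossible for any orbit with $N_1(0)>0$. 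Therefore $\xb^\star=\th_1$.

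\emph{Identifying the limit.} With $\xb(t)\to\th_1$, the $\om$-limit lies in the invariant slice $\{\xb=\th_1\}$, on which the reduced dynamics form the two-dimensional Lotka--Volterra predator--prey system with fixed attack rate $\ab_1(\th_1)$. The Takeuchi--Adachi theorem cited earlier supplies a globally asymptotically stable equilibrium for that system -- the interior equilibrium $(\hat N_1,\hat P)$ when the predator can invade $(K_1,0)$, and $(K_1,0)$ otherwise -- and the same linearization idea used in the previous step removes any unstable boundary equilibrium from the $\om$-limit. Thus the $\om$-limit is exactly the equilibrium stated in the proposition and the trajectory converges to it.
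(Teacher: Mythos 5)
Your argument is correct in outline but reaches the conclusion by a genuinely different route from the paper, and one step needs shoring up. The paper restricts the trait to the compact interval $[\theta_1,\theta_2]$, proves permanence of the $(N_1,P)$ subsystem by applying Garay's theorem to the Morse decomposition $\{\,\{N_1=P=0\},\,\{(K_1,0,\theta_1)\}\,\}$, and uses the resulting uniform lower bound $\liminf_{t\to\infty}N_1(t)\ge\beta$ to force $\xb(t)\to\theta_1$. You instead exploit the fact that with a single prey present $\frac{d\xb}{dt}$ has the sign of $\theta_1-\xb$, so $\xb$ is monotone and converges to some $\xb^\star$, and you eliminate $\xb^\star\ne\theta_1$ by the LaSalle argument plus the repulsion estimate $\dot N_1/N_1\ge r_1/2$ near the origin (which is exactly the paper's argument that its set $M_1$ is isolated with empty stable set). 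This is more elementary -- no Morse decomposition, no permanence theorem -- and that portion of your proof is sound.

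The soft spot is the final step. Once $\xb(t)\to\theta_1$, the planar system you must analyze is only \emph{asymptotically} autonomous: $(N_1(t),P(t))$ solves the Lotka--Volterra equations with the time-dependent attack rate $\ab_1(\xb(t))$, not with $\ab_1(\theta_1)$. Global asymptotic stability of the interior equilibrium for the limit system does not by itself imply that the perturbed orbit converges to it (this is the classic pitfall, cf. Thieme's counterexamples); invariance of the $\omega$-limit set under the limit flow still permits, a priori, sets such as the closure of the heteroclinic orbit from $(K_1,0)$ to $(\hat N_1,\hat P)$. Closing this requires the chain-recurrence property of $\omega$-limit sets of asymptotically autonomous semiflows, together with acyclicity of the equilibria and a Butler--McGehee-type argument to exclude boundary equilibria from a nontrivial $\omega$-limit set -- precisely what the paper outsources to Theorem 1.8 of Mischaikow--Smith--Thieme. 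Your phrase ``the same linearization idea removes any unstable boundary equilibrium'' gestures at this, but as written it only rules out \emph{convergence to} those equilibria, not their \emph{membership in} a larger $\omega$-limit set. Invoke the asymptotically-autonomous machinery (or run the Butler--McGehee argument explicitly) and the proof is complete.
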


For the two prey subsystem, we have the following characterization of the eco-evolutionary dynamics. 

\begin{proposition}\label{prop:prey-prey}
Assume $P(0)=0$ and $N_1(0)N_2(0)>0$.  Then
\[
\lim_{t\to\infty} (N_1(t), N_2(t),\xb(t))=(K_1, K_2, \xb^*)
\]
for some $\xb \in Q=\{\xb\in \R | \frac{\partial{W}}{\partial{\xb}}(K_1, K_2, \xb)=0\}$ 
\end{proposition}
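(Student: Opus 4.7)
The strategy is to reduce to a one-dimensional asymptotically autonomous problem whose limit is a gradient flow. First I would observe that $\{P=0\}$ is forward invariant since $\dot P = P\,\Wb$, so $P(t)\equiv 0$ along the entire orbit. With $P\equiv 0$, the two prey equations decouple into independent logistic ODEs, so $N_i(t)\to K_i$ exponentially for $i=1,2$. Substituting into \eqref{dyn2}, the trait equation becomes a one-dimensional asymptotically autonomous ODE whose limit is the gradient flow $\dot\xb = \si_G^2\,W_\infty'(\xb)$ with real-analytic potential
\[
W_\infty(\xb) := \Wb(K_1,K_2,\xb) = e_1\ab_1(\xb)K_1 + e_2\ab_2(\xb)K_2 - d.
\]

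Next I would show that $\xb(t)$ stays in a bounded interval. Each summand of $\partial_\xb\Wb$ in \eqref{dyn2} is a strictly positive multiple of $\th_i-\xb$ whenever $N_i>0$. Consequently $\dot\xb(t)<0$ when $\xb(t)>\max(\th_1,\th_2)$ and $\dot\xb(t)>0$ when $\xb(t)<\min(\th_1,\th_2)$, so a standard trapping argument confines the orbit to the compact interval $[\min(\xb(0),\th_1,\th_2),\,\max(\xb(0),\th_1,\th_2)]$.

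With the orbit precompact, I would apply the internal chain transitivity theorem for asymptotically autonomous semiflows (e.g.\ Mischaikow--Smith--Thieme, or Hirsch--Smith--Zhao): the $\omega$-limit set $\omega(\xb(\cdot))$ is a nonempty, compact, connected, internally chain-transitive subset of the limit flow. For a 1D gradient flow the chain-recurrent set coincides with the equilibrium set $Q=\{W_\infty'=0\}$, so $\omega(\xb)\subseteq Q$. Since $W_\infty$ is a nonconstant real-analytic function, $Q$ is discrete, and a connected subset of a discrete set is a singleton; hence $\omega(\xb) = \{\xb^*\}$ for some $\xb^*\in Q$. Combined with $N_i(t)\to K_i$, this yields the stated limit.

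The main obstacle is the identification $\omega(\xb)\subseteq Q$. A self-contained alternative uses $V(t) := -\Wb(N_1(t),N_2(t),\xb(t))$ as a Lyapunov-type function: its derivative
\[
\dot V = -\si_G^2\bigl(\partial_\xb\Wb\bigr)^2 - \sum_{i=1}^2 e_i\ab_i(\xb)\,r_i N_i(1-N_i/K_i)
\]
combines a nonpositive principal term with an $L^1$ perturbation coming from the exponential logistic convergence of each $N_i$. Since $V$ is bounded below, integration gives $\int_0^\infty(\partial_\xb\Wb)^2\,dt<\infty$; Barbalat's lemma, applicable by uniform continuity on the bounded orbit, then forces $\partial_\xb\Wb\to 0$, hence $W_\infty'(\xb(t))\to 0$. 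Discreteness of $Q$ upgrades this accumulation to convergence at a single point $\xb^*$.
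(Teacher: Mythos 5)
Your primary argument is correct and is essentially the paper's own proof: reduce to logistic convergence of the prey, invoke the Mischaikow--Smith--Thieme theory of asymptotically autonomous semiflows to place the $\omega$-limit set in the zero set of $\frac{\partial \Wb}{\partial \xb}(K_1,K_2,\cdot)$, and use analyticity plus connectedness of $\omega$-limit sets to extract a single point. Your explicit trapping argument for precompactness of $\xb(t)$ (which the paper leaves implicit) and the self-contained Barbalat/Lyapunov alternative are both sound additions, but they do not change the route.
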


We remark that for the single prey subsystem (i.e. $P(0)=N_j(0)=0, N_i(0)>0; i\neq j$), one can easily verify that 
\[
\lim_{t\to\infty} (N_i(t), \xb(t))=(K_i, \theta_i).
\]

In Proposition~\ref{prop:prey-prey}, $Q$ is the set of fitness extrema when both the prey are at carrying capacity.  We prove both propositions in section 6. By examining the per-capita growth of the absent species in these subsystems, in which either the predator or prey is excluded, we get the following theorem for permanence in our eco-evolutionary system:

\begin{theorem}~\label{thm:permanence} 
Let $Q=\{\xb\in \mathbb{R} | \frac{\partial{W}}{\partial{\xb}}(K_1, K_2, \xb)=0\}$ be the set of fitness extrema when both the prey are at carrying capacity. If 
\begin{enumerate}[label=(\alph*)]
\item $\frac{r_i}{\ab_i(\theta_j)} > \frac{r_j}{\ab_j(\theta_j)} (1-\frac{d}{\ab_j(\theta_j)e_jK_j})$ for $i=1,2; i\neq j$ and 
\item $\bar{W}(K_1, K_2, \xb^*) > 0$ for all $\xb^* \in Q$
\end{enumerate}
 then the system is permanent in $\mathbb{R}_+^3 \times \mathbb{R}$.  

Conversely, if the inequality in condition (a) or (b) is reversed, then the system is not permanent.  In particular, if the inequality in (a) is reversed for $i=1$ ($2$, respectively), then the equilibrium point
$ (\frac{d}{e_1\ab_1(\theta_1)}, 0, \frac{r_1(1-\frac{\hat{N}_1}{K_1})}{\ab_1(\theta_1)}, \theta_1)$
($(0, \frac{d}{e_2\ab_2(\theta_2)}, \frac{r_2(1-\frac{\hat{N}_2}{K_2})}{\ab_2(\theta_2)}, \theta_2)$, respectively)
is stable. If the inequality in condition (b) is reversed for some $\xb^*\in Q$, then there exists initial positive population densities $(N_1(0)N_2(0)P(0)>0)$ and phenotype $\xb(0)$ such that

\[
\lim_{t\rightarrow\infty} (N_1(t),N_2(t),P(t),\xb(t))= (K_1,K_2,0,\xb^*)
\]
\end{theorem}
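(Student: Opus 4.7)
The plan is to reduce Theorem~\ref{thm:permanence} to an analysis of per-capita invasion rates on the extinction boundary $\{N_1N_2P=0\}$, whose dynamics are completely described by Propositions~\ref{prop:pred-prey} and~\ref{prop:prey-prey}. First I would establish dissipativity in the density variables by bounding each $N_i$ via logistic comparison and using the Lyapunov combination $V=e_1 N_1+e_2 N_2+P$, whose time derivative satisfies $\dot V\le \sum_i e_i r_i N_i - dP$ and produces an absorbing set in $\R^3_+$. The trait $\xb$ is not constrained by the definition of permanence, but for the subsequent analysis it is useful to note that whenever $N_1$ or $N_2$ is bounded away from zero the gradient of $\Wb$ pushes $\xb$ back toward the interval spanned by $\theta_1$ and $\theta_2$, so $\xb$ is also bounded along interior orbits.

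For the sufficient direction, I would apply an average Lyapunov function criterion of Hofbauer--Garay--Schreiber type. By Propositions~\ref{prop:pred-prey} and~\ref{prop:prey-prey} together with the single-prey remark, every boundary invariant set is either the origin, a single-species equilibrium $(K_i,0,0,\theta_i)$, a two-prey equilibrium $(K_1,K_2,0,\xb^*)$ with $\xb^*\in Q$, or (when $e_i\ab_i(\theta_i)>d$) a predator--prey equilibrium $(\hat N_i,0,\hat P_i,\theta_i)$. At each I would compute the per-capita growth rate of at least one missing species: at the origin or a single-species equilibrium some prey has growth rate $r_i>0$; at the predator--prey equilibrium with $i=1$ the invasion rate of $N_2$ is $r_2-\ab_2(\theta_1)\hat P_1$, which after substituting $\hat P_1=r_1(1-\hat N_1/K_1)/\ab_1(\theta_1)$ reduces algebraically to exactly condition (a) with $i=2$; at a two-prey equilibrium the predator invasion rate is $\Wb(K_1,K_2,\xb^*)$, positive by (b). Choosing positive weights $(p_1,p_2,p_3)$ so that a weighted sum of invasion rates is positive on every boundary invariant set, $\rho=N_1^{p_1}N_2^{p_2}P^{p_3}$ serves as the required average Lyapunov function.

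For the converse, suppose (a) is reversed for $i=1$. The equilibrium $E_1=(\hat N_1,0,\hat P_1,\theta_1)$ has linearization whose $(N_1,P)$-block is the standard Lotka--Volterra coexistence block with eigenvalues of negative real part, whose $\xb$-eigenvalue equals $-\si_G^2 e_1\hat N_1\tau_1\alpha_1/(\tau_1^2+\si^2)^{3/2}<0$, and whose transverse $N_2$-eigenvalue is $r_2-\ab_2(\theta_1)\hat P_1\le 0$. Thus $E_1$ is locally asymptotically stable in $\R^3_+\times\R$, so its basin contains an open set of interior initial data with $\liminf_t N_2(t)=0$ and permanence fails. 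If (b) is reversed at some $\xb^*\in Q$, consider $E^*=(K_1,K_2,0,\xb^*)$; its Jacobian is block triangular with eigenvalues $-r_1,-r_2,\Wb(K_1,K_2,\xb^*)<0$, and $\si_G^2 \partial^2\Wb/\partial\xb^2$ evaluated at $E^*$. If $\xb^*$ is a local maximum of $\xb\mapsto\Wb(K_1,K_2,\xb)$ (the stable evolutionary equilibria on the boundary $P=0$), the fourth eigenvalue is non-positive and $E^*$ is locally attracting, so small interior perturbations converge to it. Otherwise the three-dimensional local stable manifold of $E^*$ is tangent to a subspace whose eigenvector for $\Wb(K_1,K_2,\xb^*)$ has nonzero $P$-component, hence intersects the interior $\{N_1N_2P>0\}$ transversally and furnishes the required initial datum whose orbit converges to $E^*$.

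The main obstacle I anticipate is in the sufficient direction: verifying that a single choice of positive weights $(p_1,p_2,p_3)$ makes $\rho$ an average Lyapunov function simultaneously at every boundary invariant set, despite $\xb$ living in the non-compact $\R$. The dissipativity step above yields a compact global attractor in the density variables and confines $\xb$ along interior orbits, so the finitely many boundary equilibria above constitute a Morse decomposition of the boundary dynamics; the weight condition then reduces to a finite linear feasibility check that the invasion rates computed above combine into a strictly positive linear form, which is possible precisely because (a) and (b) make each invasion rate strictly positive.
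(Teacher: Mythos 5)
Your proposal is correct in substance and rests on exactly the same invasion-rate calculations as the paper --- the per-capita growth rate $r_2-\ab_2(\theta_1)\hat P_1$ of the missing prey at the predator--prey boundary equilibrium (which is condition (a) in disguise) and the predator growth rate $\Wb(K_1,K_2,\xb^*)$ at the two-prey equilibria (condition (b)) --- but the sufficiency direction travels a genuinely different road. The paper invokes Garay's Morse-decomposition criterion: it constructs an explicit ordered decomposition $\{M_1,\dots,M_n\}$ of the boundary dynamics on the global attractor and verifies, for each Morse set separately, that it is isolated relative to the interior and that its stable set misses the interior, via a contradiction argument (an orbit trapped near $M_i$ forces unbounded exponential growth of some missing species). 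You instead use an average Lyapunov function $\rho=N_1^{p_1}N_2^{p_2}P^{p_3}$, which requires a single choice of weights working simultaneously at every boundary omega-limit point but dispenses with constructing and ordering a Morse decomposition. Both are standard routes, and both work here because Propositions~\ref{prop:pred-prey} and~\ref{prop:prey-prey} collapse the boundary recurrence to finitely many equilibria (plus rest points with both prey absent). Your converse arguments --- block-triangular Jacobian for reversed (a), stable/center manifold meeting the interior for reversed (b) --- are essentially the paper's arguments.

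Three points need tightening. First, the weight feasibility is not ``automatic because each invasion rate is positive'': at the origin and at $(K_i,0,0,\theta_i)$ the predator's per-capita rate can be negative (it is $-d$ at the origin), so you must take $p_3$ small relative to $p_1,p_2$; this is a trivial linear feasibility problem but should be stated rather than attributed to conditions (a) and (b). Second, your classification of boundary invariant sets omits the continuum of rest points $\{(0,0,0,\xb):\xb\in\R\}$; this is precisely why the system fails to be dissipative on $\R_+^3\times\R$, and the clean fix (which the paper makes explicit and you only gesture at) is to first prove permanence on the forward-invariant set $\R_+^3\times[\theta_1,\theta_2]$ and then show every interior orbit eventually enters it. Third, your indexing in the reversed-(a) case follows the theorem's statement, but the transverse eigenvalue you compute, $r_2-\ab_2(\theta_1)\hat P_1<0$, is the reversal of (a) for $i=2$, not $i=1$; the paper's own proof also uses $i=2$ at this equilibrium, so your computation matches the proof rather than the label in the statement --- worth flagging rather than silently reproducing.
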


\paragraph{\bf Note:} We conjecture that in fact the conditions for permanence ensure robust permanence (see, e.g., \citet{jde-00}). 

\begin{figure}
\includegraphics[width=0.6\textwidth]{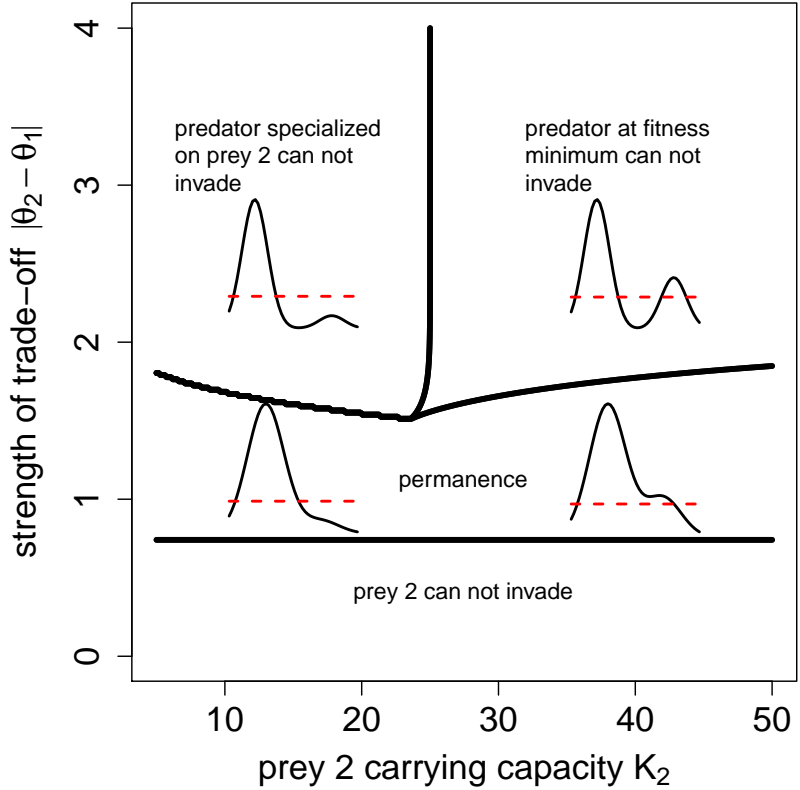}
\caption{The effects of trade-off strengths and carrying capacities on permanence. In this bifurcation diagram, prey $2$ is the inferior prey species (i.e. has the lower value of $r_i/\ab_i(\theta_i)$). At intermediate trade-offs, the system is always permanent. At weak trade-offs, permanence is lost due to the inferior prey species being unable to invade. At strong trade-offs, the predator can only conditionally establish. The graphs in these regions correspond to fitness landscapes $\Wb$ when both prey are at their carrying capacities. Parameters: $r_1=1,r_2=0.25$, $K_1=100$, $\alpha_1=\alpha_2=0.01,\tau_1=\tau_2=0.4,\sigma=0.3, e_1=e_2=0.5$ ,and $d=0.1$.}\label{fig:permanence}
\end{figure}

We can use this theorem to infer how the trade-off and carrying capacities affect permanence (Fig.~\ref{fig:permanence}).  The strength of the trade-off, measured as $|\theta_2-\theta_1|$, has two main effects on permanence, that cumulatively result in permanence only occurring at intermediate trade-offs.  First, strong trade-offs affect prey persistence. 
Let us assume that $\frac{r_1}{\ab_1(\theta_1)}>\frac{r_2}{\ab_2(\theta_2)}$ in which case condition (a) is always met for prey $1$ i.e. prey $1$ always persists, independent of the trade-off.  If the carrying capacity of prey $1$ is sufficiently large and there is no trade-off (i.e. $\theta_1=\theta_2$), then the equilibrium only supporting prey $1$ and the predator is stable and the system is not permanent. However, increasing the trade-off $|\theta_2-\theta_1|$ decreases the attack rate,  $\ab_2(\theta_1)$, on prey $2$ when the predator is specialized on prey $1$. Hence, condition (a) for prey $2$, is more easily satisfied at stronger trade-offs (e.g. prey $2$ persists provided $|\theta_1-\theta_2|>0.75$ in Fig.~\ref{fig:permanence}).  

Second, the strength of the trade-off determines the shape of the fitness landscape and consequently, the set $Q$ of fitness extrema in condition (b).  For  $\tau_1=\tau_2$, we have shown elsewhere~\citep{patel-schreiber-preprint} that $|\theta_2-\theta_1|\leq 2\sqrt{\sigma^2+\tau^2}$ implies that the predator fitness landscape is unimodal with a single fitness maximum (see fitness curves in Fig.~\ref{fig:permanence} for $|\theta_2-\theta_1|\approx 1$). Thus, for weak trade-offs it suffices to check the growth at this optimal phenotype for when the prey are at carrying capacity. Alternatively, for $|\theta_2-\theta_1|> 2\sqrt{\sigma^2+\tau^2}$, the fitness curves can have multiple extrema, typically with two fitness maxima near the specialized phenotypes and one minimum at intermediate phenotypes when the prey are at carrying capacity (see fitness curves in Fig.~\ref{fig:permanence} for $|\theta_2-\theta_1|\approx 2.5$). For sufficiently strong trade-offs, the predator will not have positive growth at intermediate phenotypes, and specifically at the fitness minimum, as it cannot attack either prey efficiently. Hence, the system is not permanent at sufficiently strong trade-offs as there are initial conditions leading to the exclusion of the predator. Furthermore, as shown in Fig.~\ref{fig:permanence}, if one of the prey has a sufficiently low carrying capacity, then there is a stable equilibrium excluding predators experiencing strong trade-offs. However, coexistence is still possible in a weaker sense as we discuss in the next section.

\section{A fast-slow approximation}

To refine our understanding of the eco-evolutionary dynamics, we observe that the phenotypic dynamics occur at a slower time scale than the ecological dynamics. Hence, as a first approximation, we consider the limiting case of when the ecological dynamics are much faster than the evolutionary dynamics. To perform this time scale separation in a quantitative genetics framework requires a bit of care as the genetic variation $\sigma_G^2$ which scales the rate of change $\xb$ in the phenotypic dynamics also influences the ecological dynamics through the phenotypic variation term $\sigma^2=\sigma^2_G+\sigma^2_E$. To separate out these effects, we consider the heritability of the phenotypic variation 
\[
h^2=\frac{\sigma^2_G}{\sigma^2}
\]
which varies between $0$, when none of the phenotypic variation is inherited, and $1$, when all the phenotypic variation is inherited. For a fixed level of phenotypic variation $\sigma^2$, varying $h^2$ only influences the speed of the evolutionary dynamic and has no immediate effect on the ecological dynamics. Hence, for our fast-slow approximations, we assume that $\sigma^2$ is fixed and we vary the speed of evolution by varying $h^2$.

If $h^2=0$, then $\xb$ remains constant and the dynamics of \eqref{dyn1} correspond to the classical Lotka-Volterra dynamics of two prey species with a common predator. \cite{takeuchi-adachi-83} have studied these dynamics in great detail and have shown (cf. Theorem 6 with $\alpha=\beta=0$) that all of the positive solutions of these equations converge to a unique stable equilibrium, call it  $(\hat N_1(\xb), \hat N_2(\xb), \hat P (\xb))$,  whenever all species are initially present. This equilibrium can either support all three species (i.e. all components are positive), only the prey species (i.e. only the first two components are positive), or the predator species and only one of the prey species (i.e. one of the first two components is zero, the other components are positive). The graph of this function, $\mathcal{E}=\{(\hat N_1(\xb), \hat N_2(\xb), \hat P (\xb),\xb)|\xb \in [\theta_1,\theta_2]\}$, defines a piecewise smooth, one dimensional manifold homeomorphic to $[\theta_1,\theta_2]$ that is a global attractor for the dynamics of \eqref{dyn1} when $h=0$  i.e., all solutions with all species initially present converge to $\mathcal{E}$. 

When $h^2$ is positive but small, we can approximate the dynamics of the fully coupled system with the fast-slow approximation:
\begin{equation}\label{fast}
\begin{aligned}
\frac{d\xb}{dt} &= h^2\si^2 \frac{d\Wb}{d\xb} (\hat N_1(\xb), \hat N_2(\xb))\\
N_i(t)&=\hat N_i(\xb(t)) \mbox{ and } P(t)=\hat P (\xb(t)).
\end{aligned}
\end{equation}
Since $(\hat N_1(\xb), \hat N_2(\xb),\hat P(\xb))$ is piecewise smooth and continuous, the same holds for the right hand side of \eqref{fast}. In particular, the right hand side is Lipschitz and, consequently, solutions exist and are unique. 

Using geometric singular perturbation theory (see, e.g., \cite{hek-10} for a nice review), we prove the following theorem. This theorem provides a sufficient condition for the existence of a stable equilibrium for the eco-evo dynamics and an estimate of the size of its basin of attraction.

\begin{theorem}~\label{thm:fast-slow}  Let $[a,b]$ be a subinterval of $[\theta_1,\theta_2]$ and $\xb^*\in (a,b)$ be a trait value at which $(\hat N_1(\xb), \hat N_2(\xb))$ is continuously differentiable. If $\frac{d\Wb}{d\xb} (\hat N_1(\xb), \hat N_2(\xb),\xb)> 0$ for all $\xb \in [a,\xb^*)$ and $\frac{d\Wb}{d\xb} (\hat N_1(\xb), \hat N_2(\xb))< 0$  for all $\xb\in(\xb^*,b]$,  then for all $\delta>0$ there exists $h_0>0$ such that solutions of \eqref{dyn1}-\eqref{dyn2} satisfy
\[
\lim_{t\to\infty}(N_1(t),N_2(t),P(t),\xb(t))=(\hat N_1(\xb^*), \hat N_2(\xb^*),\hat P(\xb^*),\xb^*)
\]
whenever $N_1(0)N_2(0)P(0)>\delta$, $\xb(0)\in [a,b]$, and $0<h<h_0$. \end{theorem}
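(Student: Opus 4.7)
The plan is to recast \eqref{dyn1}--\eqref{dyn2} as a singularly perturbed slow-fast system with slow variable $\xb$, fast variables $(N_1, N_2, P)$, and small parameter $h^2$, and to invoke Fenichel's geometric singular perturbation theorem near $\xb^*$. Setting $h = 0$ freezes $\xb$ and collapses the ecology to the Takeuchi--Adachi apparent-competition system, whose globally asymptotically stable equilibrium $(\hat N_1(\xb), \hat N_2(\xb), \hat P(\xb))$ supplies the critical manifold $\mathcal{E}$.

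The first main step is to verify normal hyperbolicity of $\mathcal{E}$ over a neighborhood $U$ of $\xb^*$. By hypothesis $(\hat N_1, \hat N_2)$ is $C^1$ there, so $\mathcal{E}$ is a $C^1$ one-manifold in $U$; a direct computation of the fast Jacobian at $(\hat N_1(\xb), \hat N_2(\xb), \hat P(\xb))$ in whichever Takeuchi--Adachi regime applies at $\xb^*$ (three-species coexistence, predator exclusion, or predator plus one prey) shows that its transverse eigenvalues have strictly negative real part, uniformly in $\xb \in U$. Fenichel's theorem then produces, for all sufficiently small $h>0$, a locally invariant, exponentially attracting slow manifold $\mathcal{E}_h$ over $U$ that is $C^1$-close to $\mathcal{E}$ and on which the $\xb$-equation is an $O(h^2)$-perturbation of the reduced flow in \eqref{fast}. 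Because $F(\xb^*) := \frac{d\Wb}{d\xb}(\hat N_1(\xb^*), \hat N_2(\xb^*), \xb^*) = 0$ by continuity and the sign change of $F$ across $\xb^*$, the point $(\hat N_1(\xb^*), \hat N_2(\xb^*), \hat P(\xb^*), \xb^*)$ is already a true equilibrium of the full system and so lies on $\mathcal{E}_h$; its asymptotic stability within $\mathcal{E}_h$ follows from the sign change of $F$, which persists under the $O(h^2)$-perturbation of the slow vector field.

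The second step is to drive an arbitrary initial condition with $N_1(0)N_2(0)P(0)>\delta$ and $\xb(0)\in[a,b]$ into the Fenichel basin at $\xb^*$. A Tikhonov-type argument, using that $\frac{d\xb}{dt} = O(h^2)$ while the Takeuchi--Adachi system globally attracts the fast subsystem from the compact set $\{N_1N_2P \geq \delta\}$ on an $O(1)$ time scale, shows that for small $h$ the trajectory enters any prescribed neighborhood of $\mathcal{E}$ while $\xb$ remains essentially unchanged. Thereafter $\xb(t)$ is governed to leading order by the Lipschitz vector field $\frac{d\xb}{dt} = h^2\sigma^2 F(\xb)$; the sign hypothesis forces monotone drift of $\xb$ toward $\xb^*$ until it enters $U$, where the Fenichel contraction established above takes over and delivers the claimed limit.

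The principal obstacle is the piecewise-smoothness of $\mathcal{E}$: at the $\xb$-values where the Takeuchi--Adachi equilibrium switches regime (predator exclusion, or exclusion of one of the prey), $C^1$-smoothness fails and standard Fenichel does not apply. I plan to confine the Fenichel step to the smooth neighborhood $U$ guaranteed by hypothesis, and elsewhere on $[a,b]$ to rely on Lipschitz continuity of $(\hat N_1, \hat N_2, \hat P)$ together with the global Takeuchi--Adachi attraction to justify the Tikhonov tracking of $\mathcal{E}$. A related subtlety is that $P(t)$ may fall exponentially close to zero if $\xb$ traverses a predator-exclusion segment; this is consistent with the conclusion when $\hat P(\xb^*) = 0$, and when $\hat P(\xb^*) > 0$ the real-time duration $O(1/h^2)$ spent in the positive-$\hat P$ region is long enough for $P(t)$ to recover to $\hat P(\xb^*)$.
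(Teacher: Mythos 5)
Your proposal is correct and follows essentially the same route as the paper: Fenichel's theorem applied only over a smooth neighborhood of $\xb^*$ where the critical manifold $\mathcal{E}$ is $C^1$ and normally hyperbolic, a Tikhonov/continuous-dependence argument to bring trajectories from the compact set $\{N_1N_2P\ge\delta\}$ into a neighborhood of $\mathcal{E}$, and the sign condition on $\frac{d\Wb}{d\xb}$ (extended by continuity to that neighborhood) to drive $\xb$ monotonically into the Fenichel region. The only point to tighten is your claim that stability on the slow manifold follows because ``the sign change of $F$ persists under perturbation'' --- since $F$ vanishes at $\xb^*$, an $O(h)$ perturbation can create spurious zeros nearby, so you should instead argue, as the paper does, that the target point is the \emph{unique} equilibrium of the full system in the relevant neighborhood (the equilibria are independent of $h>0$) and that the maximal invariant set there is a one-dimensional interval, forcing convergence to that equilibrium.
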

 
\paragraph{\textbf{Note}:} We conjecture that $h_0$ in the statement of Theorem~\ref{thm:fast-slow} can be chosen to be independent of $\delta>0$. 

\begin{figure}
\begin{tabular}{cc}
\includegraphics[width=0.5\textwidth]{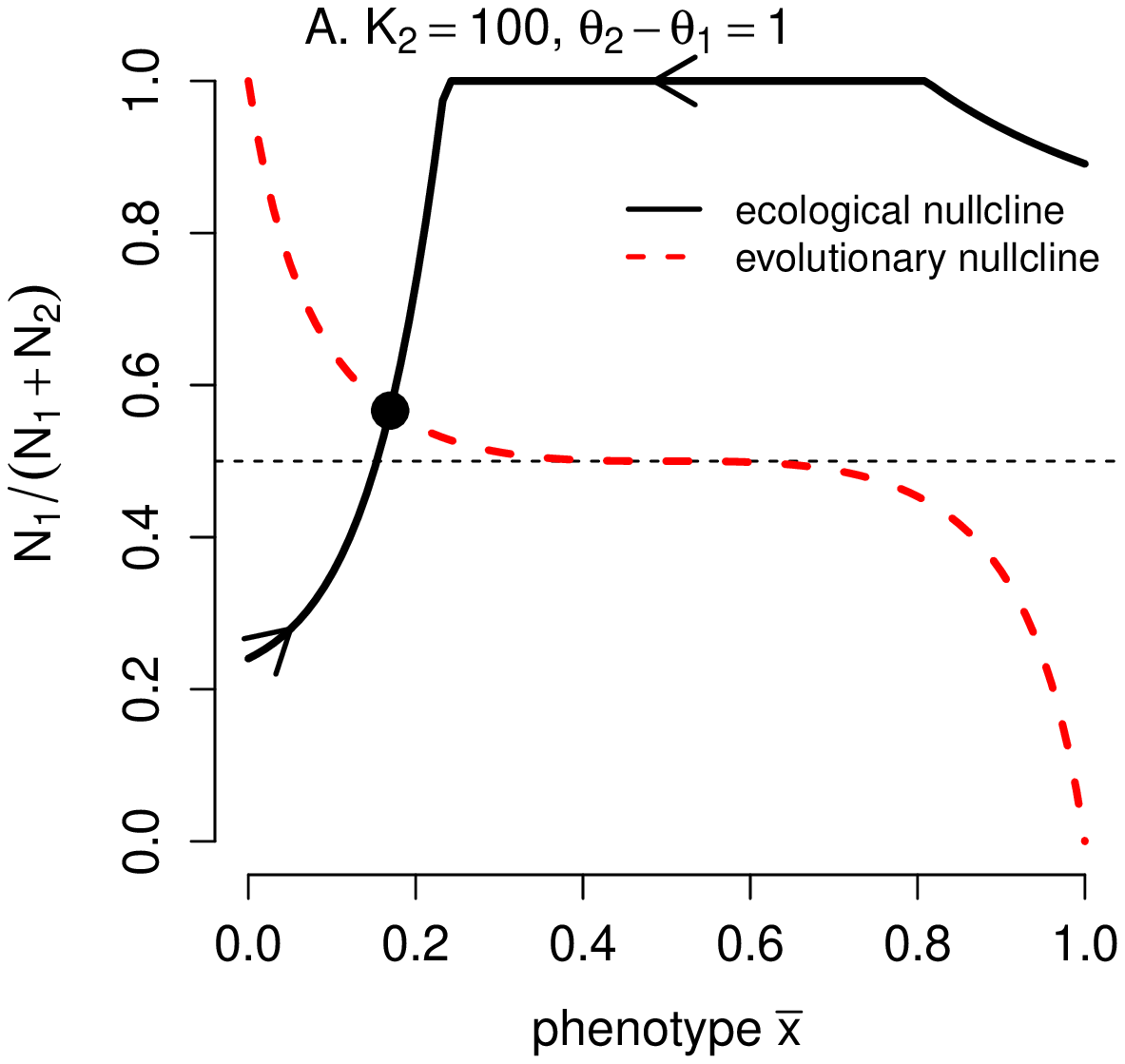}&\includegraphics[width=0.5\textwidth]{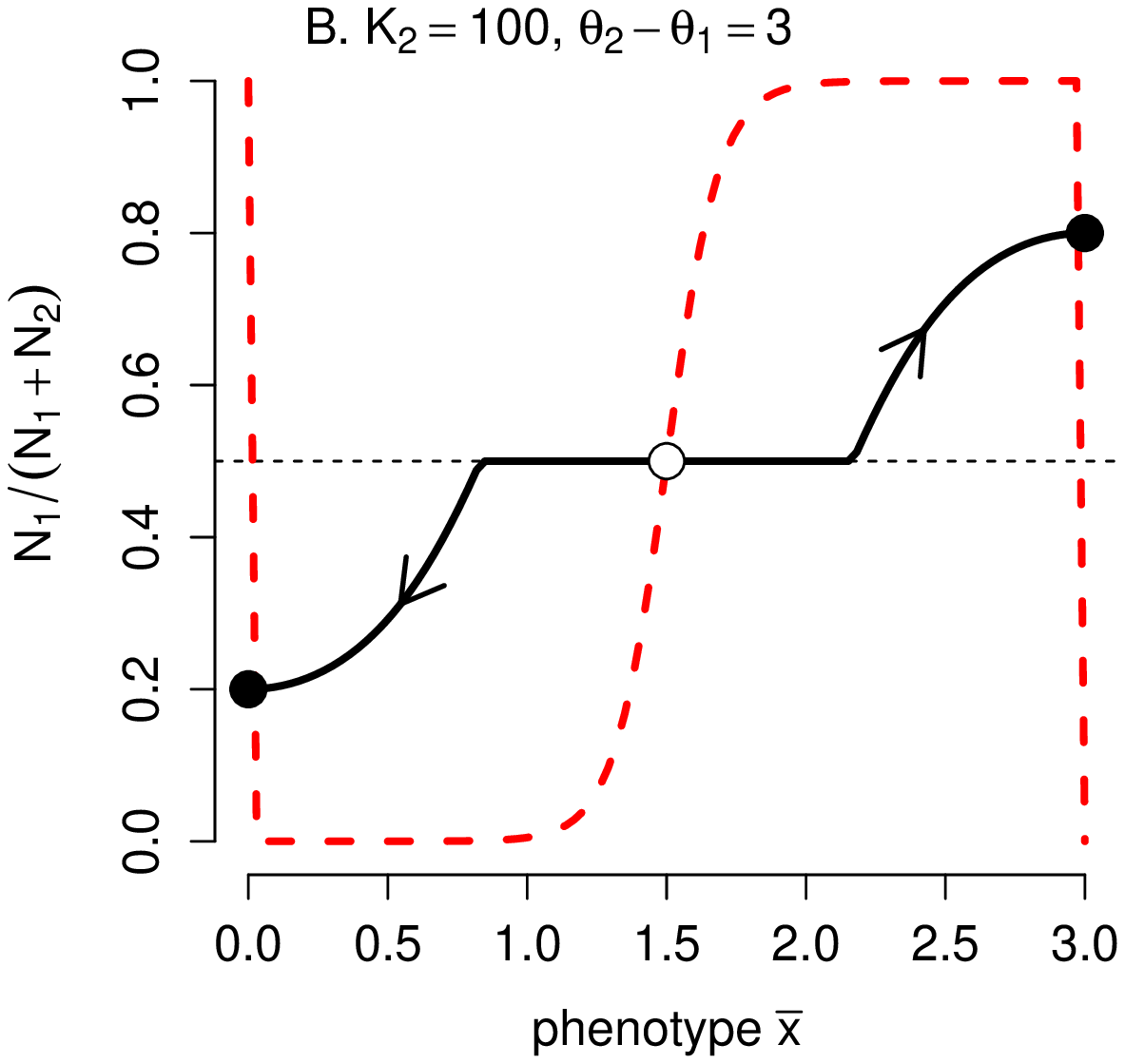}\\
\includegraphics[width=0.5\textwidth]{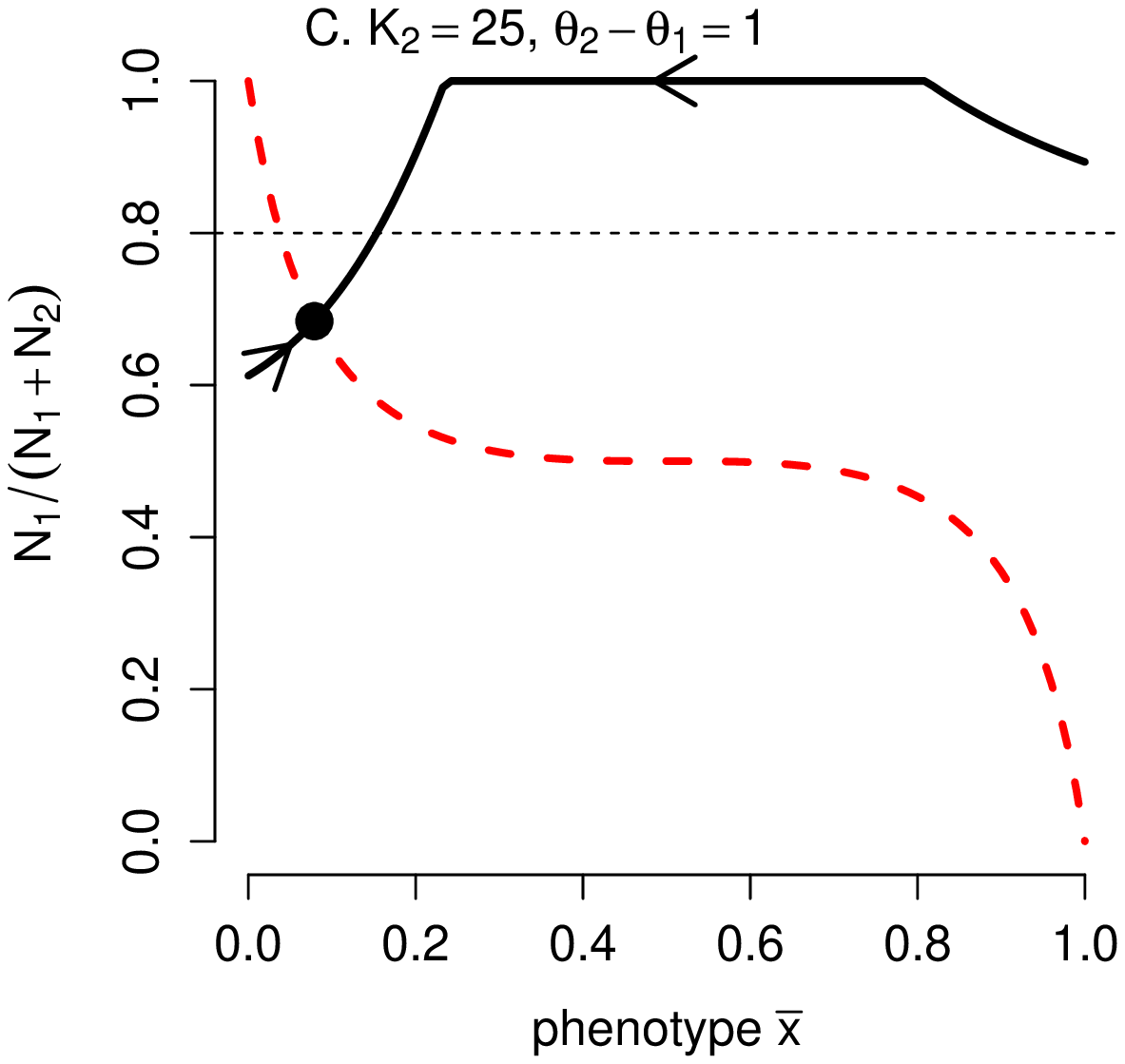}&\includegraphics[width=0.5\textwidth]{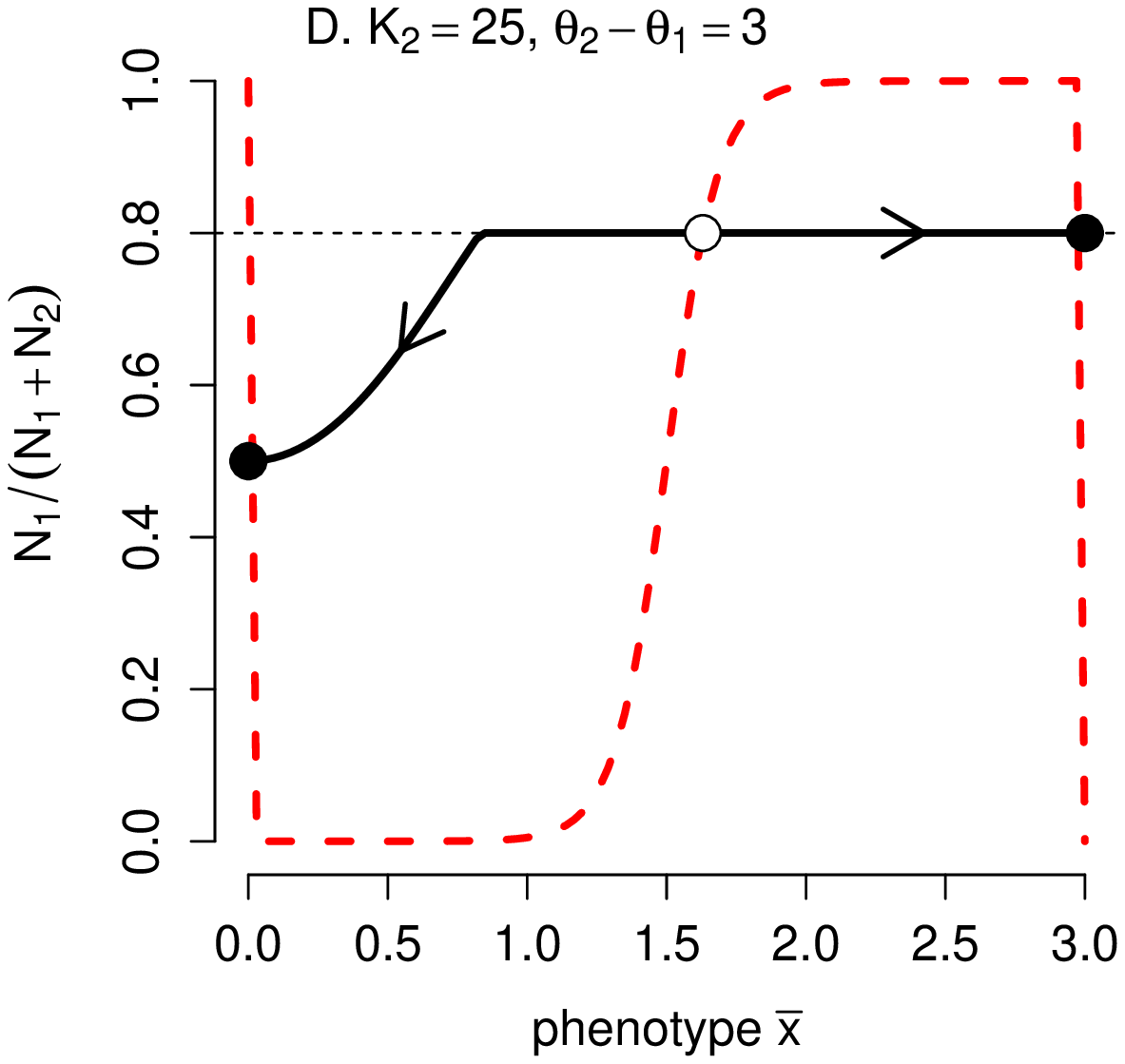}\\
\end{tabular}
\caption{Graphical representation of the eco-evolutionary dynamics for low heritability ($h^2\approx 0$). Black solid lines correspond to the ratio of $\frac{N_1}{N_1+N_2}$ at the ecological quasi-stable states i.e. the ecological nullcline. Dashed red lines correspond to where $\frac{d\xb}{dt}=0$ i.e. the evolutionary nullcline. In the fast slow limit, the dynamics converge toward the ecological nullcline and move to the right or left depending on whether $\frac{d\xb}{dt}>0$ or $\frac{d\xb}{dt}<0$. Eco-evolutionary equilibria occur at the intersection of these nullclines. Theorem~\ref{thm:fast-slow} implies the black equilibria are stable. Parameters: $r_1=1,r_2=0.25$, $K_1=100$, $\alpha_1=\alpha_2=0.01,\tau_1=\tau_2=0.4,\sigma=0.3, e_1=e_2=0.5$, and $d=0.1$. }\label{fig:eco-evo-plane}
\end{figure}

To apply Theorem~\ref{thm:fast-slow}, we use a graphical approach in the plane where the horizontal axis corresponds to the mean trait value and the vertical axis characterizes the quasi-stable-equilibria of the ecological dynamics (Fig.~\ref{fig:eco-evo-plane}).  We use the fraction $\frac{N_1}{N_1+N_2}$ at the quasi-stable-equilibria to characterize the ecological state of the community. This ratio equals zero if prey $1$ is excluded, and equals one if prey $2$ is excluded. When the predator is excluded this ratio equals $\frac{K_1}{K_1+K_2}$. Any other value of this ratio (i.e. not $0$, $K_1/(K_1+K_2)$, or $1$) corresponds to a quasi-stable-equilibria for which all three species coexist. 

The graphical approach continues by drawing two curves in the rectangle $[\theta_1,\theta_2]\times [0,1]$. The first curve, ``the ecological nullcline'', corresponds to the graph of quasi-stable equilibria for the ecological dynamics i.e. the graph of the function: \[\xb\mapsto\frac{\hat N_1(\xb)}{\hat N_1(\xb)+\hat N_2(\xb)}.\]
As the points along this curve correspond to globally stable equilibria for the ecological dynamics, the eco-evo dynamics ``rapidly'' approach and move along these curves in the fast-slow limit.  These ecological nullclines correspond to the solid black curves in Figure~\ref{fig:eco-evo-plane}.  

The second curve, ``the evolutionary nullcline'', is the set of values $(\xb, y)$ such that 
\[
0= y\frac{e_1 \tau_1 \alpha_1 (\theta_1 -\xb)}{(\tau_1^2+\sigma^2)^{3/2}} \exp \left[-\frac{(\xb-\theta_1)^2}{2(\tau_1^2+\sigma^2)}\right]+(1-y)\frac{e_2  \tau_2 \alpha_2 (\theta_2 -\xb)}{(\tau_2^2+\sigma^2)^{3/2}} \exp \left[-\frac{(\xb-\theta_2)^2}{2(\tau_2^2+\sigma^2)}\right]
\]
This second curve corresponds to ratios $y=\frac{N_1}{N_1+N_2}$ at which $\frac{d\xb}{dt}=0$. The mean trait has a negative rate of change ($\frac{d\xb}{dt}<0$) at points lying below the evolutionary nullcline, and a positive rate of change ($\frac{d\xb}{dt}>0$) for points lying above. In Figure~\ref{fig:eco-evo-plane}, the evolutionary nullclines correspond to the dashed red curves.

Intersections between the ecological and evolutionary nullclines correspond to equilibria of \eqref{dyn1}-\eqref{dyn2}. Since the evolutionary nullcline separates the regions in which $\frac{d\xb}{dt}>0$ from the regions in which $\frac{d\xb}{dt}<0$, we can use these graphs to identify eco-evo equilibria satisfying the conditions of Theorem~\ref{thm:fast-slow}.  For instance, in Figure~\ref{fig:eco-evo-plane}, the black points correspond to equilibria for which $\frac{d\xb}{dt}>0$ for lower $\xb$ values along the ecological curve and $\frac{d\xb}{dt}<0$ for higher $\xb$ values along the ecological curve. Hence, Theorem~\ref{thm:fast-slow} implies that these black points correspond to stable equilibria for \eqref{dyn1}-\eqref{dyn2} whenever $h^2>0$ is sufficiently small. Moreover, let $[a,b]$ be an interval of $\xb$ values for which the ecological curve only intersects the evolutionary nullcline at the stable equilibrium. Then for any $\delta>0$, Theorem~\ref{thm:fast-slow} implies that the basin of attraction for these stable equilibria include the set $[\delta,\infty)^3 \times [a,b]$ provided that $h^2>0$ is sufficiently small. 

Figure~\ref{fig:eco-evo-plane} illustrates how changing the carrying capacities of the prey and the strength of the trade-offs influence long-term eco-evolutionary outcomes. When the trade-off is weak, the ecological and evolutionary nullclines intersect at a single point (Fig.~\ref{fig:eco-evo-plane}A,C). In this case, ``most'' initial conditions (i.e. those for which the initial densities are bounded below by an arbitrarily small $\delta>0$) converge to this equilibrium provided $h^2>0$ is sufficiently small. Indeed, we conjecture that this unique intersection point is globally stable i.e. all positive initial conditions converge to these equilibrium. The conjecture following the statement of Theorem~\ref{thm:fast-slow} would imply this result if one could show there is always a unique intersection of the evolutionary and ecological nullclines at weak trade-offs.  

When the strength of the trade-off is strong,  the evolutionary nullcline becomes ``S'' shaped and there are two stable equilibria (Fig.~\ref{fig:eco-evo-plane}B,D). Provided the predator can be sustained on both prey species, the predator can evolve to specialize on one of the prey species  (Fig.~\ref{fig:eco-evo-plane}B). However, if the carrying capacity of one of the prey species is too low and the predator is initially too specialized on this prey species, then the predator evolves to further specialize on this prey and drives itself to extinction despite the possibility of persisting if initially sufficiently specialized on the other prey species (Fig.~\ref{fig:eco-evo-plane}D). 

\begin{figure}
\begin{tabular}{cc}
\includegraphics[width=\textwidth]{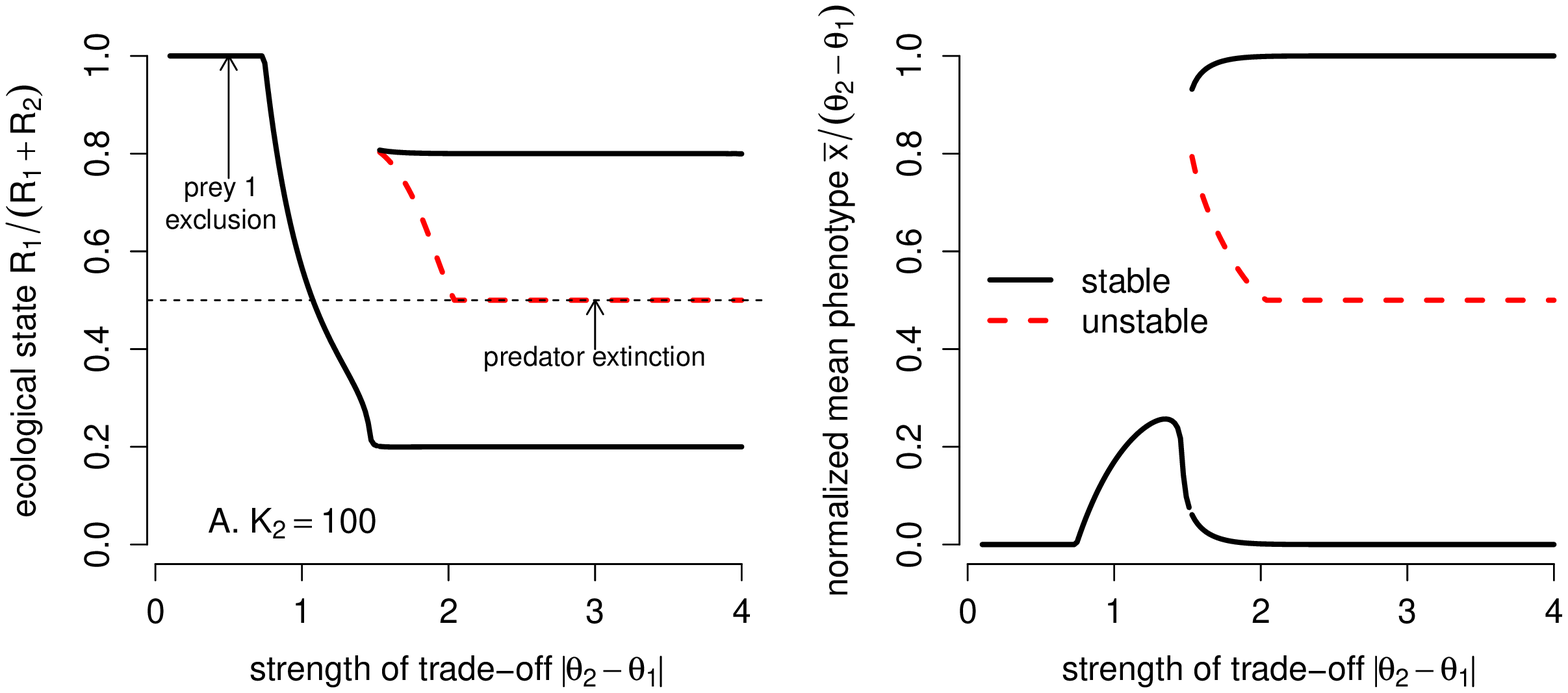}\\
\includegraphics[width=\textwidth]{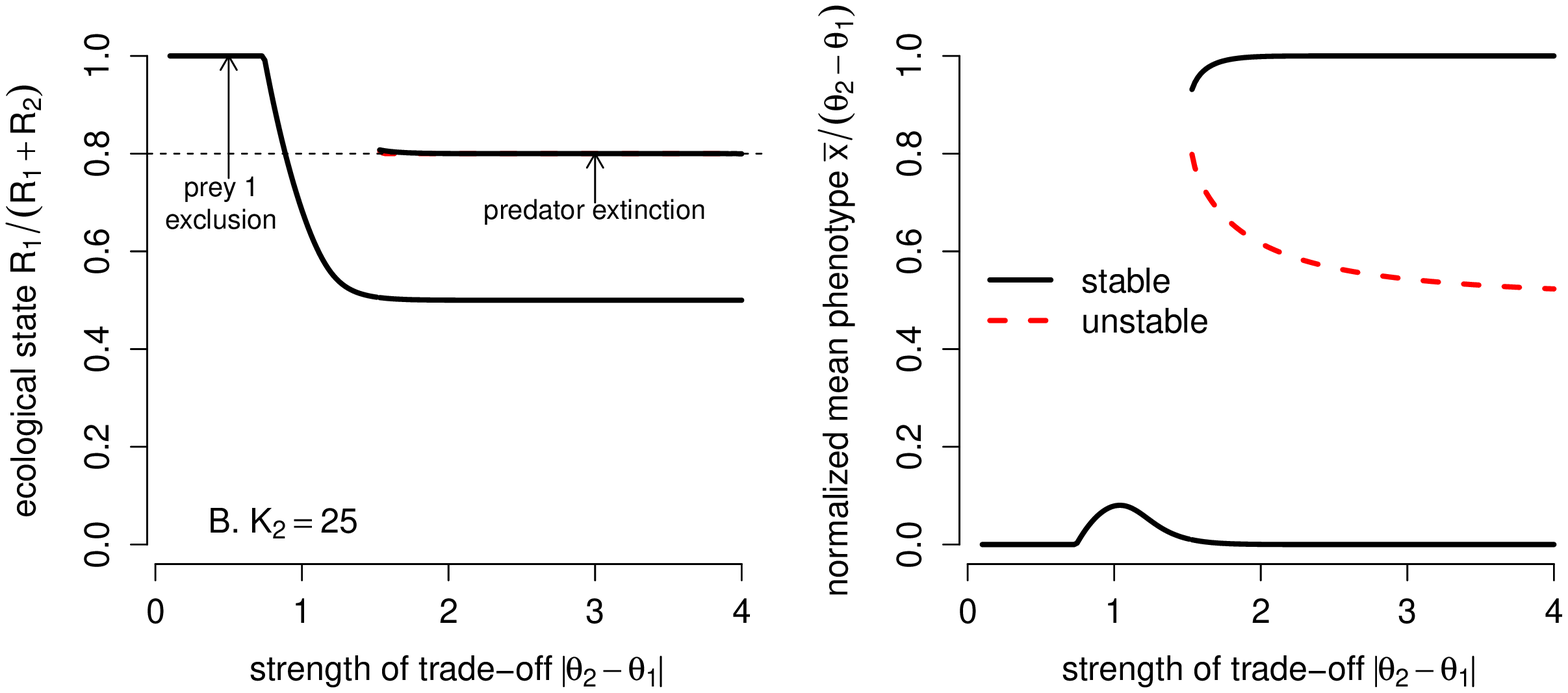}\\
\end{tabular}
\caption{The effect of trade-off strengths of the eco-evolutionary dynamics at equal (A) and unequal (B) prey carrying capacities. Black curves correspond to stable equilibria (for low $h^2$) of the ecological state (left) and the evolutionary state (right). Red dashed curves correspond to unstable (for low $h^2$) equilibria. Parameters: $r_1=1,r_2=0.25,K_1=100,\alpha_1=\alpha_2=0.01,\tau_1=\tau_2=0.4,\sigma=0.3, e_1=e_2=0.5$, and $d=0.1$.}\label{fig:bif}
\end{figure}

Figure~\ref{fig:bif} illustrates how the eco-evolutionary dynamics change as the strength of the trade-off increases. At sufficiently low trade-offs, prey species $2$ is excluded as the superior prey species (species $1$) has a sufficiently large carrying capacity. When the carrying capacities of both prey are sufficiently high, the predator always persists at the stable equilibria, but is excluded at an unstable equilibria when trade-offs are sufficiently high (Figure~\ref{fig:bif}A). At these higher trade-offs, the system is not permanent. However, we conjecture that the system is almost-surely permanent i.e. almost every initial conditions converges to an interior attractor~\citep{jde-04}. Theorem~\ref{thm:fast-slow} implies there are two attractors at which all species coexist. At one attractor, the predator is more specialized on prey $1$ and at the other attractor, it is more specialized on prey $2$. In sharp contrast,  when the carrying capacity of the inferior prey species (species $2$) is low and there are sufficiently high trade-offs, one attractor supports coexistence of all species, while the predator is excluded at the other attractor (Figure~\ref{fig:bif}B).

\section{discussion}

We analyzed a model introduced by \citet{ecology-11b} of two prey species sharing an evolving predator. The trait under selection is continuous and influences the predator's ability to attack the prey species. In particular, there is a trade-off in which traits best-adapted for attacking one prey species aren't best-adapted for attacking the other prey species. We show that the strength of this trade-off determines whether one of the prey species is excluded, the system is permanent in which case coexistence occurs for all initial conditions, or there is contingent coexistence or, more generally, multiple attractors. 
We prove that permanence requires that (i) each prey has a positive per-capita growth rate when the predator is specialized on the other prey and (ii) the predator has a positive per-capita growth rate at all the fitness extrema when the prey are at their carrying capacities.  When top-down effects of the predator are strong (e.g. the predator maximal attack rates or prey carrying capacities are high), we show that the equilibrium excluding the inferior prey (i.e. the prey with the smaller value of $r_i/\ab_i(\theta_i)$) is stable at sufficiently weak trade-offs. Hence, the community may not persist if the predator only experiences weak trade-offs. On the other hand, for sufficiently strong trade-offs, the predator has a negative per-capita growth rate at a fitness minimum when the prey are at their carrying capacities. Consequently, there are initial conditions leading to the exclusion of the predator. In the presence of these strong top-down effects, permanence only occurs at intermediate trade-offs which provide the correct balance of reducing the negative indirect effect of the superior prey on the inferior prey and the predator phenotypes always having a positive per-capita growth rate when the prey are at their carrying capacities. These permanence results highlight how eco-evolutionary feedbacks can alter long-term ecological outcomes. They complement earlier work on eco-evolutionary dynamics of competing species which found eco-evolutionary feedbacks can facilitate or hinder coexistence~\citep{rael-etal-11,vasseur-etal-11}.

When the trait dynamics occur sufficiently slower than the population dynamics, we  characterized the stable equilibria for the eco-evolutionary dynamics. Moreover, we proved that these stable equilibria are nearly globally stable with respect to the ecological state. Namely, provided all species densities are not too low and the predator trait is sufficiently close to its equilibrium value, the eco-evolutionary dynamics converge to the stable equilibrium. Verifying this characterization using a graphical approach, we refined our understanding of the eco-evolutionary dynamics when top-down predator effects are strong. Most notably, at strong trade-offs there are two attractors corresponding to the predator evolving to specialize on one or the other prey species. If the carrying capacities of both prey species are sufficiently high, then all species coexist at these attractors. However, if the carrying capacity of one prey species is low, then one attractor corresponds to extinction of the predator. If the predator population is initially overly specialized on the prey with the lower carrying capacity, the strong trade-off results in selection for continued specialization on this prey species despite it ultimately resulting in extinction.

Our analysis highlights the potential importance of eco-evolutionary feedbacks determining long-term community structure. However, many challenges remain. We used a quantitative genetics framework to model the evolutionary process. While this approach provides a useful first pass on understanding the evolution of continuous traits, genetic variances are likely to change overtime and, consequently, it would be valuable to model the full distributional dynamics to determine the robustness of the conclusions drawn here to this added realism. Comparisons to other models with different genetic architectures, such as clonal evolution or sexual reproduction with only a few loci, will help us understand the role of this architecture  on coexistence. We also assumed that only the predator evolves, but it is likely that the prey would coevolve with the predator. Understanding the resulting eco-evolutionary feedbacks is likely to be very challenging, but ultimating crucial for understanding what the role of coevolution in stabilizing or destabilizing ecological communities. 

\section{Proof of Theorem~\ref{thm:permanence}}
We begin with some definitions and a restatement of the Corollary to Theorem 2 in \cite{garay-89}, which we use in proving the propositions as well as the main proof. Let $E$ be a closed subset of a locally compact metric space $(\mathcal{E}, d)$. Define $d(z,M)=\inf\{d(z,m)|m\in M\}$, for $z\in E$ and $M\subset E$.  Let $\pi$ be a dynamical system on $E$ with $\pi:E\times \R \rightarrow E$ with $\pi(z,0)=z$ and $\pi(\pi(z,t),s)=\pi(z, t+s)$.  
For $z\in E$, $\omega(z)=\cap_{t\geq0} \overline{\{\pi(z,s)|s\in [t,\infty)\}}$ is the $\omega$-limit set and $\alpha(z)=\cap_{t\leq0}\overline{\{\pi(z,s)|s\in (-\infty,s]\}}$ is the $\alpha$-limit set.  The stable set of a compact invariant set $M \subset E$ is $W^+(M)=\{z \in E|\omega(z)\subset M\}$.  A compact invariant set $A$ is an attractor provided there is an open neighborhood $U$ of $A$ such that $\cap_{t\ge 0} \overline{\cup_{s\ge t}\{\pi(z,s)|z\in U\} }=A$.  We say that $\pi$ is dissipative if there is a compact global attractor $S$ such that $W^+(S)=E$. 

A collection $\mathcal{M} =\{M_1, M_2, \dots, M_n\}$ is a \emph{Morse decomposition} for a $\pi$ if $M_1, M_2,...M_n$ are pairwise disjoint, compact, isolated invariant sets with the property that for each $z\in E$ there are integers $i=i(z)$ and $j=j(z)$ with $i\leq j$ and such that $\omega(z)\subset M_i, \alpha(z)\subset M_j$ and if $i=j$, then $z\in M_i=M_j$.
Now we can state the theorem from \cite{garay-89}.

\begin{theorem}\label{thm:garay} 
Let $(\mathcal{E}, d)$ be a locally compact metric space.  Let $\pi$ be a dissipative dynamical system on a closed set $E\subset \mathcal{E}$ with maximal compact invariant set $S$. Let  $\partial E\subset E$ be a closed invariant set and $\mathring{E}=E\setminus \partial{E}$.  If  there exists a Morse decomposition $\mathcal{M} =\{M_1, M_2, \dots, M_n\}$ for $\pi|S\cap \partial{E}$ (the dynamics restricted to $S\cap\partial{E}$) such that for each $i\in\{1, 2, \dots, n\}$
\begin{enumerate}
\item there exists a $\gamma$ such that the set $\{z\in\mathring{E}|d(z,M_i)<\gamma\}$ contains no entire trajectories, and
\item $\mathring{E}\cap W^+(M_i)=\emptyset$
\end{enumerate}
then $\pi$ is permanent, i.e. there exists a $\beta$ such that
\[ 
\liminf_{t \rightarrow \infty} d(\pi(z, t), \partial{E}) \geq \beta
\]
for all $z\in \mathring{E}$. 
\end{theorem}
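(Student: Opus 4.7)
The plan is to argue by contradiction, combining the Butler--McGehee lemma with the acyclic partial order carried by any Morse decomposition. Suppose, toward a contradiction, that $\pi$ is not permanent, so there exists $z_0\in\mathring E$ with $\omega(z_0)\cap\partial E\neq\emptyset$. By dissipativity $\omega(z_0)$ is a nonempty compact invariant subset of $S$, and since $\partial E$ is invariant, the full orbit of any $y\in\omega(z_0)\cap\partial E$ remains in the compact invariant set $\omega(z_0)\cap S\cap\partial E$. The defining property of the Morse decomposition applied to $\pi|_{S\cap\partial E}$ then yields $\omega(y)\subseteq M_{i_1}$ for some index $i_1$, so $\omega(z_0)\cap M_{i_1}\neq\emptyset$.

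The second step is to extract a connecting orbit via Butler--McGehee. Condition~(2) rules out $\omega(z_0)\subseteq M_{i_1}$, since otherwise $z_0\in W^+(M_{i_1})\cap\mathring E$. Isolation of $M_{i_1}$, a consequence of being a member of a Morse decomposition, together with the Butler--McGehee lemma applied inside the compact invariant set $\omega(z_0)$, then produces a point $y_1\in\omega(z_0)\setminus M_{i_1}$ whose $\alpha$-limit lies in $M_{i_1}$ and whose full orbit lies in $\omega(z_0)\subseteq S$.

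The third step is to iterate this construction and produce a strictly descending chain of Morse indices $i_1>i_2>\cdots$. If $y_1\in\partial E$ the Morse property immediately yields $\omega(y_1)\subseteq M_{i_2}$ with $i_1>i_2$, the strict inequality holding because $y_1\notin M_{i_1}$. If instead $y_1\in\mathring E$, the complement $\mathring E$ is itself invariant (an orbit through $\mathring E$ cannot meet the invariant set $\partial E$ without having been in $\partial E$ at time zero), so the full orbit of $y_1$ lies in $\mathring E$; condition~(1) then forbids this orbit from remaining in any fixed $\gamma$-neighborhood of $M_{i_1}$, and repeating the first two steps with $y_1$ in place of $z_0$ extracts a new Morse index $i_2$ strictly below $i_1$ together with a connecting orbit.

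The main obstacle is precisely the bookkeeping in the interior case of the third step: one must verify that the Butler--McGehee iteration always produces a genuinely new Morse set strictly below its predecessor in the Morse order, even when intermediate witnesses $y_k$ lie in $\mathring E$. Conditions~(1) and~(2) play complementary roles here; condition~(2) prevents interior orbits from converging into any single $M_i$, while condition~(1) supplies the mobility needed to push orbits away from an already-visited Morse set. Once strict descent is secured, the finiteness of $\mathcal{M}=\{M_1,\dots,M_n\}$ and the acyclicity of the Morse order force termination after at most $n$ iterations; the terminal step then exhibits either a cycle in the Morse order or an interior point with $\omega$-limit contained in a single Morse set, each contradicting the standing hypotheses. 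The resulting contradiction yields the uniform lower bound $\beta>0$ claimed in the conclusion.
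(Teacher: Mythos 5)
The paper does not actually prove this statement: it is quoted verbatim as the Corollary to Theorem 2 of \citet{garay-89}, whose own proof runs through chain recurrence, so there is no in-paper argument to compare against. Your Butler--McGehee/acyclic-chain strategy is the other standard route to results of this type (Butler--Waltman, Hale--Waltman), but as sketched it has two genuine gaps. The first is at the very start: ``not permanent'' does not yield a single $z_0\in\mathring{E}$ with $\omega(z_0)\cap\partial E\neq\emptyset$; it only yields a sequence $z_k\in\mathring{E}$ with $\liminf_{t\to\infty} d(\pi(z_k,t),\partial E)\to 0$, each of which may individually stay bounded away from $\partial E$. What your contradiction is set up to deliver is \emph{persistence} ($\omega(z)\cap\partial E=\emptyset$ for every interior $z$), and passing from persistence to the uniform bound $\beta$ is precisely the hard content of Garay's chain-recurrence argument (or of Butler--Freedman--Waltman); your sketch never addresses it.

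The second gap is the one you flag yourself as ``the main obstacle,'' and it is not bookkeeping --- it is where your chosen direction fails. You descend through $\alpha$-limit witnesses $v$ with $\alpha(v)\subset M_{i_1}$; when such a witness lies in $\mathring{E}$, its entire orbit lies in $\mathring{E}$, its $\omega$-limit set need not meet $\partial E$ at all, and the Morse order --- which constrains only orbits of $\pi|S\cap\partial E$ --- gives no strict descent. Condition (1) does not rescue this: it forbids entire interior trajectories confined to a small neighborhood of $M_i$, not interior connecting orbits that wander far from every $M_i$. The repair is to run the chain in the opposite direction: the Butler--McGehee lemma also supplies $u_1\in\omega(z_0)\setminus M_{i_1}$ with $\omega(u_1)\subset M_{i_1}$, and hypothesis (2) forces $u_1\notin\mathring{E}$, hence $u_1\in\partial E$ and its entire orbit lies in the compact invariant set $\omega(z_0)\cap\partial E\subset S\cap\partial E$, where the Morse property applies and gives $\alpha(u_1)\subset M_{j_1}$ with $j_1>i_1$ strictly (since $u_1\notin M_{i_1}$). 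Iterating produces a strictly increasing sequence of indices, contradicting finiteness of $\mathcal{M}$. In short: use the $\omega$-limit half of Butler--McGehee so that condition (2) pins every witness to the boundary where the Morse order actually operates, and supply a separate compactness argument for the uniformity of $\beta$.
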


Note that if $E=\mathbb{R}_+^3\times \mathbb{R}$, with the mean phenotype taking any real values then our system is not dissipative: when both prey are excluded, every point in $\{(0,0,0,\xb) | \xb \in \mathbb{R} \}$ is a fixed point. Thus, to be able to apply Theorem \ref{thm:garay}, we restrict the phenotype space to $[\theta_1, \theta_2]$ and define $E= \mathbb{R}_+^3 \times [\theta_1, \theta_2]$. \eqref{dyn1}-\eqref{dyn2} generates a semi-flow $\pi: \mathbb{R}_+ \times E \rightarrow E$.  Since $\frac{d\xb}{dt}\geq 0$ whenever $\xb=\theta_1$ and $\frac{d\xb}{dt}\leq 0$ whenever $\xb=\theta_2$, $E= \mathbb{R}_+^3 \times [\theta_1, \theta_2]$ forms a forward invariant set.  We define $\partial{E}=\{(N_1, N_2, P, \xb)| N_1N_2P=0\}$ to be the set in which at least one of the species is excluded, and it follows that $\mathring{E}$ is the set in which all three species coexist. 

Dissipativeness of $\pi$ follows from a proof similar to that given in \citep[Lemma 4]{jde-04} that includes the Lotka-Volterra model of apparent competition.

To apply Theorem \ref{thm:garay} to \eqref{dyn1}-\eqref{dyn2}, we begin by proving the two propositions for the two species subsystems. We begin with the proof for the predator-prey subsystem. 

\begin{proof}[Proof of Proposition~\ref{prop:pred-prey}.] We provide the details of the proof for the case $e_1\ab_1(\theta_1)K_1>d$ as the complementary case proceeds similarly. For the duration of this proof, let $E=\R_+\times \{0\} \times \R_+ \times [\theta_1,\theta_2]$ (i.e. the state space restricted to the subsystem without prey $2$) and $\pi$ denote the semi-flow of \eqref{dyn1}-\eqref{dyn2} restricted to $E$. Since $\pi$ is dissipative, there exists a global attractor $S\subset E$. 

We begin by showing that this system is permanent with respect to the boundary set $\partial E = \R_+\times \{0\}\times\{0\}\times [\theta_1,\theta_2]\cup \{0\}\times\R_+\times\{0\}\times [\theta_1,\theta_2]$. To this end, we verify the assumptions of Theorem~\ref{thm:garay}. Define $\mathring E=E\setminus \partial E$. As $\frac{dP}{dt}=-dP$ whenever $N_1=N_2=0$, we have $S\cap \partial E \subset [0,K_1]\times \{0\} \times \{0\} \times [\theta_1,\theta_2]$. A Morse decomposition of $\pi$ restricted to $S\cap \partial E$ is given by $M_1=\{0\}\times \{0\} \times \{0\} \times [\theta_1,\theta_2]$ and $M_2=\{(K_1,0,0,\theta_1)\}$.  To see that $M_1$ is isolated with respect to $\mathring E$ and $W^+(M_1)\cap \mathring E=\emptyset$ (i.e. the conditions of Theorem~\ref{thm:garay}), choose $\epsilon>0$ sufficiently small that $r_1N_1(1-N_1/K_1)-\ab_1(\theta_1)N_1P\ge r_1 N_1/2$ whenever $N_1\le \epsilon$, $P\le \epsilon$. Suppose to the contrary that $M_1$ is not isolated from $\mathring E$. Then there exists an invariant set $B$ in $\mathring E$ such that $\max_{(N_1,0,P,\xb)\in B} |N_1|+|P|<\epsilon$. For any solution $(N_1(t),0,P(t),\xb(t))$ with initial condition $(N_1(0),0,P(0),\xb(0))\in B$ satisfying $N_1(0)P(0)>0$, we have that $N_1'(t)\ge r_1 N_1(t)/2$ for all $t\ge 0$ which implies $\lim_{t\to\infty} N_1(t)=\infty$. However, this violates the fact that $B$ is bounded. Using a similar argument, there can be no initial condition in $\mathring E$ whose $\omega$-limit set lies in $M_1$. Hence $W^+(M_1)\cap \mathring E=\emptyset$.  

As $e_1\ab_1(\theta_1)K_1>d$, we can find a neighborhood $U$ of $M_2$ and $\delta>0$ such that $e_1 \ab_1(\xb) N_1 -d \ge \delta$ whenever $(N_1,0,P,\xb) \in U$. Hence, using a similar argument as for the set $M_1$, we can conclude that $M_2$ is isolated relative to $\mathring E$ and $W^+(M_2)\cap \mathring E =\emptyset$. Theorem~\ref{thm:garay} implies $\pi$ is permanent with respect to $\partial E$. 

Now consider a solution $N_1(t),P(t),\xb(t)$ such that $N_1(0)P(0)>0$. Since the system is permanent, there exists $\beta>0$ such that $\liminf_{t\to\infty}\min\{N_1(t),P(t)\}\ge \beta $.  Provided $N_1>0$, the function
\[
\frac{\partial \Wb}{\partial \xb} = \frac{e_1 N_1 \tau_1 \alpha_1 (\theta_1 -\xb)}{(\tau_1^2+\sigma^2)^{3/2}} \exp \left[-\frac{(\xb-\theta_1)^2}{2(\tau_1^2+\sigma^2)}\right]
\]
has a unique zero at $\xb=\theta_1$, is positive for $\xb<\theta_1$, and is negative for $\xb>\theta_1$. Since $N_1(t)\ge \beta/2$ for $t$ sufficiently large, it follows that $\lim_{t\to\infty}\xb(t)=\theta_1$. As the ecological dynamics \eqref{dyn1} with the fixed value of $\xb=\theta_1$ has a global stable equilibria positive equilibrium $(d/(e_1\ab_1(\theta_1)), r_1(1-\hat{N}_1/K_1)/\ab_1(\theta_1), \theta_1)$, \citet[Theorem 1.8]{mischaikow-etal-95} implies the $\omega$-limit set of $(N_1(t),P(t),\xb(t))$ is this equilibrium. \end{proof}

Next we prove the proposition for the two prey subsystem. 

\begin{proof}[Proof of Proposition~\ref{prop:prey-prey}.] 
Assume $(N_1(t), N_2(t), 0, \xb(t))$ is a solution to \eqref{dyn1}-\eqref{dyn2} with $N_1(0)N_2(0)>0$.  Prey species $i$ experiences logistic growth, so $\lim_{t\rightarrow \infty} N_i(t) = K_i$ for $i=1, 2$. Let $A=\{(K_1, K_2, \xb) | \xb\in \mathbb{R}\}$, which is invariant, since $\frac{dN_i}{dt}=0$ for $i=1, 2$. Let $f(\xb)=\sigma_G^2 \frac{d\Wb}{d\xb}(K_1, K_2, \xb)$ be the phenotype dynamics when $N_i=K_i$ for $i=1, 2$.  
\citet[Theorem 1.8]{mischaikow-etal-95} implies that $\omega((N_1(0), N_2(0), 0, \xb(0)))\subset \{(K_1, K_2, 0, \xb)| f(\xb)=0\}$. Since $f$ is an analytic function and $f(\xb)<0$ for $\xb$ sufficiently negative or positive, the zeros of $f$ are isolated.  Since an $\omega$-limit is a connected set, $\omega((N_1(0), N_2(0), 0, \xb(0)))$ is a single point in $\{(K_1, K_2, 0, \xb)| f(\xb)=0\}$.
\end{proof}

Now, we prove the main theorem.  Without loss of generality, assume $\ab_1(\theta_1)e_1K_1 \geq \ab_2(\theta_2)e_2K_2$.  Then, there are three cases to consider: (i) $d\geq \ab_1(\theta_1)e_1K_1$, (ii) $\ab_1(\theta_1)e_1K_1 >d\geq  \ab_2(\theta_2)e_2K_2$ or (iii) $\ab_2(\theta_2)e_2K_2>d$.  

Recall that $E=\mathbb{R}_+^3 \times [\theta_1, \theta_2]$. Since $\pi$ is dissipative in $E$, there is a global attractor $S\subset E$. For each of the three cases, we claim there is  a Morse decomposition for $\pi|S\cap\partial{E}$ that satisfies the assumptions of Theorem \ref{thm:garay}. 

We start with case (iii). Let $\mathcal{M} = \{M_6, M_5, M_4, M_3, M_2, M_1\}$, where $M_6= \{(0,0,0)\} \times [\theta_1, \theta_2], M_5=\{(K_1, 0, 0, \theta_1)\}, M_4=\{(0, K_2, 0, \theta_2)\}, M_3=\{(\hat{N}_1, 0, \hat{P}_1, \theta_1)\}, $ $M_2=\{(0, \hat{N}_2, \hat{P}_2, \theta_2)\}$ where $\hat{N}_i=\frac{d}{e_i\ab_i(\theta_i)}$ and $\hat{P}_i=\frac{r_i(1-\frac{\hat{N}_i}{K_i})}{\ab_i(\theta_i)}$, and $M_1= \{(K_1, K_2, 0)\}\times [\xb_1, \xb_2]$ where $\xb_1=\min_{\xb\in Q}\xb$ and $\xb_2= \max_{\xb\in Q}\xb$. 

We claim that $\mathcal{M}$ is a Morse decomposition for $S\cap \partial E$. We need to verify the $\omega$-limit and $\alpha$-limit sets property of Morse decompositions.  Let $z\in (S\cap \partial{E})\setminus \mathcal{M}$, where $z=(N_1, N_2, P, \xb)$. Then, one of the following holds:

\begin{enumerate}[label=(\alph*)]
\item $N_2=P=0, N_1>0$
\item $N_2=0, N_1>0, P>0$
\item $N_1=P=0, N_2>0$
\item $N_1=0, N_2>0, P>0$
\item $P=0, N_1>0, N_2>0$
\end{enumerate}

If (a) holds, then $\omega(z)= M_5$ and $\alpha(z)\subset M_6$.  If (b) holds, then Proposition~\ref{prop:pred-prey} implies that  $\omega(z)= M_3$.  
The invariance of $\alpha$-limits and Proposition \ref{prop:pred-prey} imply $\alpha(z)\subset \{(N_1, N_2, P, \xb) | N_1P=0, N_2=0\}\cup M_3$.  
Then, by \citet[Proposition 1.5]{mischaikow-etal-95}, $\alpha(z)\subset M_6$, $\alpha(z)= M_5$, or $\alpha(z)=M_3$.  Since $(\hat{N}_1, \hat{P}_1)$ is globally stable for $\pi|\{(N_1,0,P,\theta_1)|N_1,P\in\R\}$, and $\frac{d\xb}{dt}\leq 0$ whenever $N_2=0$ and $\xb\geq \theta_1$, $\omega(z)=\alpha(z)=M_3$ implies that $z\in M_3$.  Cases (c) and (d) follow similarly. If (e) holds, then $\omega(z) \subset M_1$, by Proposition 2, and either $\alpha(z) \subset M_5$, $\alpha(z) \subset M_4$, or $\alpha(z) \subset M_6$.  

Thus, we have shown that $\mathcal{M}_3$ forms a Morse decomposition for $\pi| S\cap \partial{E}$ for case (iii).

Finally, we verify the two assumptions of Theorem \ref{thm:garay} using arguments similar to those made in the proof of Proposition \ref{prop:pred-prey}. To show that $M_1$ is isolated for $\mathring E$ and $W^+(M_1)\cap \mathring E=\emptyset$, recall by assumption that  $e_1\ab_1(\xb^*)K_1+e_2\ab_2(\xb^*)K_2>d$ for all $\xb^*\in Q$.   Hence, there exists a neighborhood $U$ of $M_1$ and $\delta>0$ such that $e_1\ab_1(\xb)N_1+e_2\ab_2(\xb)N_2-d>\delta$ for all $(N_1,N_2,P,\xb)\in U$. Now suppose to the contrary that $M_1$ is not isolated from $\mathring E$ or $W^+(M_1)\cap \mathring E \neq \emptyset$. Then there exists a solution $(N_1(t),N_2(t),P(t),\xb(t))$ which lies in $U\cap \mathring E$ for all $t\ge 0$. But this implies that $P'(t)\ge \delta P(t)$ for all $t\ge 0$ and, consequently, $\lim_{t\to\infty}P(t)=\infty$, a contradiction. We can use the same type of argument to show that $M_i$ is isolated from $\mathring E$ and $W^+(M_i)\cap \mathring E=\emptyset$ for $i=2,\dots,6$. Specifically, for $M_6$, we use a neighborhood $U$ and $\delta>0$ such that $r_1(1-N_1/K_1)-\ab_1(\xb)P>\delta$ for all $(N_1,N_2,P,\xb)\in U$. For $M_5$ and $M_4$, we use a neighborhood $U$ of $M_5\cup M_4$ and $\delta>0$ such that $e_1\ab_1(\xb)N_1+e_2\ab_2(\xb)N_2-d>\delta$ for all $(N_1,N_2,P,\xb)\in U$.  For $M_3$, we use  a neighborhood  $U$ of $M_3$ and $\delta>0$ such that $r_2(1-N_2/K_2)-\ab_2(\xb)P>\delta$ for all $(N_1,N_2,P,\xb)\in U$. Finally, for $M_2$, we use a neighborhood $U$ of $M_2$ and $\delta>0$ such that $r_1(1-N_1/K_1)-\ab_1(\xb)P>\delta$ for all $(N_1,N_2,P,\xb)\in U$. It follows that by Theorem \ref{thm:garay} $\pi|E$ is permanent for case (iii).

We can provide permanence for cases (i) and (ii) in a similar manner using slightly different Morse decompositions. Specifically, for case (i), we can use the Morse decomposition given by  $\mathcal{M} = \{M_4, M_3, M_2, M_1\}$, where $M_4= (0,0,0) \times [\theta_1, \theta_2], M_3=(K_1,0,0, \theta_1), M_2= (0,K_2,0, \theta_2)$, and $M_1=[\xb_1, \xb_2] \times (K_1, K_2, 0)$. For case (ii), we can use the Morse decomposition given by  $\mathcal{M} = \{M_5, M_4, M_3, M_2, M_1\}$, where $M_5=[\theta_1, \theta_2] \times (0,0,0), M_4=(K_1, 0, 0, \theta_1), M_3=(0, K_2, 0, \theta_2), M_2=(\hat{N}_1, 0, \hat{P}_1, \theta_1)$, and $M_1=[\xb_1, \xb_2] \times (K_1, K_2, 0)$. 

Next, we extend our proof to any initial phenotype in $\xb(0) \in \mathbb{R}$ and $N_1(0)N_2(0)P(0)>0$.  First note that there exists a $\delta$ such that $\limsup_{t\rightarrow \infty} \max\{N_1(t), N_2(t)\}>\delta$. It follows that there exists a $\gamma$ such that whenever $\xb(t)<\theta_1$, there is a range $[t+s_1, t+s_2]$ with $s_1>0$ and $s_2>0$ such that $\frac{d\xb}{dt}>\gamma$. Thus, there exists a $T_1$, such that $\xb(T_1)\geq \theta_1$.  Similarly, we can show there exists $T_2$ such that $\xb(T_2)\leq \theta_2$. 

Finally, if inequality (a) in the statement of Theorem~\ref{thm:permanence} is reversed for $i=2$, then the Jacobian matrix at \[
(N_1,N_2,P,\xb)=\left(\frac{d}{e_1\ab_1(\theta_1)},0, \frac{r_1(1-\frac{\hat{N}_1}{K_1})}{\ab_1(\theta_1)}, \theta_1\right)\]
has the following sign structure in the $N_1,P,\xb,N_2$ coordinate system
\[
\begin{pmatrix}
-&-&*&*\\
+&0&*&*\\
0&0&-&*\\
0&0&0&-
\end{pmatrix}
\]
where $*$ indicates a term with a positive, negative or zero sign. This matrix has an upper triangular block structure with $2\times 2$, $1\times 1$, and $1\times 1$ blocks down the diagonal. Each of these diagonal blocks has a sign structure that implies the eigenvalues of these blocks have negative real parts. Hence, this equilibrium is stable by the stable manifold theorem, see e.g. \citep[Theorem 1.3.2]{guckenheimer-holmes-83}. Alternatively, the partial converse when inequality (b) in the statement of Theorem~\ref{thm:permanence} is reversed follows immediately from the center manifold theorem, see e.g. \citep[Theorem 3.2.1]{guckenheimer-holmes-83}.  

\section{Proof of Theorem~\ref{thm:fast-slow}}
Define $g(\xb)=\frac{d\Wb}{d\xb} (\hat N_1(\xb), \hat N_2(\xb),\xb)$. For any set $I\subset [\theta_1,\theta_2]$, define $\mathcal{E}(I)=\{(\hat N_1(\xb), \hat N_2(\xb), \hat P (\xb),\xb)|\xb \in I\}$. Let $\eta>0$ be such that $J=[\xb^*-\eta,\xb^*+\eta]\subset (a,b)$ and the graph of $\mathcal{E}(J)$ is continuously differentiable.  

Choose a compact neighborhood $U$ of $\mathcal{E}([a,b])$ such that (i) there exists a $h_1>0$ such that the maximal invariant set $I(h)$ in $V=U \cap \R^3_+ \times J$ for \eqref{dyn1}-\eqref{dyn2} with $0<h<h_1$ is contained in a forward invariant set that is homeomorphic to a closed interval,  (ii) $U$ is forward invariant and contains the global attractor for the dynamics of \eqref{dyn1}-\eqref{dyn2} for $h=0$ restricted to $(0,\infty)^3\times [a,b]$, (iii) $\frac{d\Wb}{d\xb} (N_1, N_2,\xb)> 0$ for all $(N_1,N_2,P,\xb)\in U\cap \R^3 \times [a,\xb^-\eta]$ and $\frac{d\Wb}{d\xb} (N_1, N_2,\xb)< 0$ for all $(\xb,N_1,N_2,P)\in U\cap \R^3 \times [\xb^+\eta,b]$, and (iv) $(\hat N_1(\xb^*),\hat N_2(\xb^2),\hat P(\xb^*),\xb^*)$ is the only equilibrium in $V$ for \eqref{dyn1}-\eqref{dyn2} for $h>0$. A neighborhood satisfying the first condition follows from geometric singular perturbation theory~\citep[Theorem 1]{fenichel-71}. A possibly smaller neighborhood satisfying the second condition follows from $\mathcal{E}([a,b])$ being a global attractor for the dynamics of $h=0$ restricted to $[a,b]\times\R_+^3$. A possibly even smaller neighborhood satisfying the third condition follows from our assumption that $g(\xb)>0$ for $\xb\in[a,\xb^*)$ and $g(\xb^*)<0$ for $\xb\in (\xb^*,b]$. Finally, a possibly even smaller neighborhood satisfying the fourth condition follows from the equilibria of \eqref{dyn1}-\eqref{dyn2} being independent of $h>0$, and our assumption that $(\hat N_2(\xb),\hat N_1(\xb), \hat P(\xb)$ being stable for the ecological dynamics and $\frac{d\Wb}{d\xb}(\hat N_1(\xb),N_2(\xb),\hat P(\xb))\neq 0$ in $[a,b]$ except at $\xb^*$.

Let $C=[\epsilon/2,M]^3 \times [a,b]$ (here $M$ is chosen so that $[0,M]^3\times [\theta_1,\theta_2]$ contains the global attractor for the dynamics.)  For $h=0$, compactness of $K$, condition (ii), and continuous dependence of solutions on initial conditions implies that there is a $T>0$ so that any solution starting in $K$ enters $U$ by time $T$. Continuity of the solutions of \eqref{dyn1}-\eqref{dyn2} with respect to the initial conditions and parameters, and compactness of $C$ implies that this holds for any $h\in [0,h_2]$ for $h_2>0$ sufficiently small. 

Let $h_0=\min\{h_1,h_2\}$ and choose $h\in (0,h_0)$. For any initial condition in $K$, $(N_1(t),N_2(t),P(t),\xb(t))\in U$ for all $t\ge T$. As $\frac{d\xb}{dt}>0$ in $U\cap \R^3 \times [a,\xb^*-\eta]$ and $\frac{d\xb}{dt}<0$ in $U\cap \R^3\times [\xb^*+\eta,b]$, this solution enters and remains in $V$ for all $t$ sufficiently large. In particular the $\omega$-limit set of the solution must lie in $V$. As the maximal invariant set in $V$ is contained in a forward invariant set homeomorphic to a closed interval and the only equilibrium in $V$ is $(\hat N_1(\xb^*),\hat N_2(\xb^*),\hat P(\xb^*),\xb^*)$ (see condition (iv) above), the maximal invariant set is this equilibrium. In particular, the $\omega$-limit set of the solution is this equilibrium and the proof is complete. 


\end{document}